\title{Bounding the Treewidth of Outer $k$-Planar Graphs via Triangulations}
\author{Oksana Firman}{Universit\"at W\"urzburg, Germany \and
  \url{https://www.informatik.uni-wuerzburg.de/en/algo/team/firman-oksana}}{}{https://orcid.org/0000-0002-9450-7640}{Supported
  by DFG grant Wo758/9-1.}
\author{Grzegorz Gutowski}{Institute of Theoretical Computer Science, Faculty of Mathematics and Computer Science, Jagiellonian University, Krak{\'o}w, Poland}{grzegorz.gutowski@uj.edu.pl}{https://orcid.org/0000-0003-3313-1237}{The research cooperation was funded by the program Excellence Initiative -- Research University at the Jagiellonian University in Kraków.}
\author{Myroslav~Kryven}{University of Manitoba, Canada}{myroslav.kryven@umanitoba.ca}{https://orcid.org/0000-0003-4778-3703}{}
\author{Yuto Okada}{Nagoya University, Japan \and \url{https://yutookada.com/en}}{okada.yuto.b3@s.mail.nagoya-u.ac.jp}{https://orcid.org/0000-0002-1156-0383}{Supported by JST SPRING, Grant Number JPMJSP2125 and JSPS KAKENHI, Grant Number JP22H00513 (Hirotaka Ono).}
\author{Alexander~Wolff}{Universit\"at W\"urzburg, Germany \and \url{https://www.informatik.uni-wuerzburg.de/en/algo/team/wolff-alexander}}{}{http://orcid.org/0000-0001-5872-718X}{}
\authorrunning{O.~Firman, G.~Gutowski, M.~Kryven, Y.~Okada, and A.~Wolff}
\keywords{treewidth, outerplanar graphs, outer $k$-planar graphs, outer min-$k$-planar
  graphs, cop number, separation number}
\crefname{observation}{Observation}{Observations}
\DeclareMathOperator{\lcr}{lcr}
\newcommand{\lcro}{\ensuremath{\lcr^\circ}}
\DeclareMathOperator{\tw}{tw}
\DeclareMathOperator{\cop}{cop}
\DeclareMathOperator{\sn}{sn}
\newcommand{\T}{\ensuremath{\mathcal{T}}\xspace}
\definecolor{defblue}{rgb}{0.121,0.47,0.705}
\definecolor{linkblue}{rgb}{0.098,0.098,0.4392}
\let\emph\relax
\DeclareTextFontCommand{\emph}{\color{defblue}\em}
\begin{document}

\maketitle

\begin{abstract}
  The \emph{treewidth} is a structural parameter that measures the
  tree-likeness of a graph.  Many algorithmic and combinatorial
  results are expressed in terms of the treewidth.  In this paper, we
  study the treewidth of \emph{outer $k$-planar} graphs, that is, 
  graphs that admit a straight-line drawing where all the vertices
  lie on a circle, and every edge is crossed by at most $k$ other edges.
  
  Wood and Telle [New York J.\ Math., 2007] showed that every outer
  $k$-planar graph has treewidth at most $3k + 11$ using so-called
  planar decompositions, and later, Auer et al.\ [Algorithmica, 2016]
  proved that the treewidth of outer $1$-planar graphs is at most $3$, which is tight.

  In this paper, we improve the general upper bound to $1.5k + 2$ and give a tight bound of $4$ for $k = 2$.
  We also establish a lower bound: we show that, for
  every even $k$, there is an outer $k$-planar graph with
  treewidth~$k+2$.  Our new bound immediately implies a better bound on
  the \emph{cop number}, which answers an open question of Durocher et al.~[GD 2023] in the affirmative.

  Our treewidth bound relies on a new and simple
  triangulation method for outer $k$-planar graphs that yields few
  crossings with graph edges per edge of the triangulation.  Our method also enables us to
  obtain a tight upper bound of $k + 2$ for the \emph{separation number} of
  outer $k$-planar graphs, improving an upper bound of $2k + 3$ by
  Chaplick et al.\ [GD 2017].
  We also consider \emph{outer min-$k$-planar} graphs, a
  generalization of outer $k$-planar graphs, where we
  achieve smaller improvements.
\end{abstract}

\section{Introduction}

\emph{Treewidth} measures the tree-likeness of a graph via so-called
tree decompositions.
A \emph{tree decomposition} of a graph~$G$ covers the vertex set of~$G$ by bags such that every edge is in some bag and the bags form a tree such that, for every vertex~$v$ of~$G$, the bags that contain~$v$ form a subtree.
The width of a tree decomposition is the maximum size of a bag minus one.
The treewidth of~$G$ is the minimum width over all tree
decompositions of~$G$.

Treewidth is an important structural parameter because
the running time of many graph algorithms and algorithms for drawing
graphs depend on the treewidth.  Recently treewidth-based techniques
have been successfully applied for \emph{convex} drawings, that is,
straight-line drawings where vertices lie on a circle.  For example,
Bannister and Eppstein~\cite{be-cm1p2pdgbtw-JGAA18} (\cite{be-cm1p2p-GD14})
showed that if a graph admits a convex drawing with at most $k$
crossings in total, then its treewidth is bounded by a function
of~$k$.  Furthermore, they used this fact to give, for a fixed $k$, a
linear-time algorithm (via extended monadic second-order logic and
Courcelle’s theorem~\cite{courcelle1990}) to decide whether a convex
drawing with at most $k$ crossings exists for a given graph.
Chaplick et al.~\cite{bundles} (\cite{cdkprw-bcr-GD19})
generalized this to \emph{bundled crossings}
(a crossing of two bundles of edges such that in each
bundle the edges travel in parallel, which is counted as one
crossing).  Another prominent class of graphs
with convex drawings is the class of
\emph{outer $k$-planar} graphs, that is, graphs that admit a convex
drawing where each edge is crossed at most $k$ times.  Unlike
\emph{$k$-planar} graphs (without the restriction on the placement of
vertices), the treewidth of outer $k$-planar graphs can be bounded by
a function of~$k$ only (see discussion below).
Similarly to Bannister and Eppstein~\cite{be-cm1p2pdgbtw-JGAA18},
Chaplick et al.~\cite{BeyondOuterplanarity} used this %
to test in linear time, for any fixed~$k$, whether a given graph is
\emph{full outer $k$-planar}, i.e., whether it admits an outer
$k$-planar drawing where no crossing appears on the boundary of the
outer face.

For disambiguation, recall that \emph{$k$-outerplanar} graphs are
defined as follows.  A drawing of a graph is $1$-outerplanar if
all vertices of the graph lie on the outer face.  For $k>1$, a drawing
is $k$-outerplanar if deleting the vertices on the outer face
yields a $(k-1)$-outerplanar drawing.  For $k \ge 1$, a graph is
$k$-outerplanar if it admits a $k$-outerplanar drawing.

In this paper, we are particularly interested in the treewidth of
outer $k$-planar graphs.
Note that every graph is outer $k$-planar for some value of~$k$.
For a graph $G$, let $\lcro(G)$ denote the \emph{convex local crossing
  number} of~$G$~-- the smallest $k$ such that $G$ is outer $k$-planar. 
Consult Schaefer's survey~\cite{Schaefer2024} for details on
this and many other types of crossing numbers.
Outer $k$-planar graphs admit balanced separators of size $O(k)$ (more
below) associated with the drawing, which makes it possible to test, for any fixed $k$, outer $k$-planarity in quasi-polynomial time \cite{BeyondOuterplanarity}.
This implies that the recognition problem is not NP-hard unless the
Exponential Time Hypothesis fails~\cite{BeyondOuterplanarity}.

We also consider a generalization of outer $k$-planar graphs, namely
\emph{outer min-$k$-planar graphs}, which are graphs that admit a
convex drawing where, for every pair of crossing edges, at least one
edge is crossed at most $k$ times.  Wood and Telle
\cite[Proposition~8.5]{wt-pdcng-NYJM07} showed that any (min-) outer
$k$-planar graph has treewidth at most $3k+11$.
More precisely, they
showed that, for every outer min-$k$-planar graph~$G$, one obtains a
tree decomposition of width at most $3k+11$ from a \emph{planar
  decomposition} of~$G$, which is a generalization of a tree decomposition.
The proof uses a result by Bodlaender~\cite{Bodlaender1998} on $k$-outerplanar graphs.

For constant~$k$, better treewidth bounds are known.  A folklore
result is that outerplanar graphs ($k=0$) have treewidth at most~$2$.
Auer et al.~\cite{auer2016outer,abbghnr-co1pg-Algorithmica21} (\cite{abbghnr-ro1pg-GD13})
showed that maximal outer $1$-planar graphs are chordal graphs and,
therefore, have treewidth at most~$3$, which is tight.

Another parameter linked to the treewidth is the separation number of
a graph.  A pair of vertex sets $(A, B)$ is a \emph{separation} of a
graph~$G$ if $A \cup B = V(G)$ and there is no edge between
$A \setminus B$ and $B \setminus A$.  A separation $(A, B)$ is said to
be \emph{balanced} if the sizes of both $A \setminus B$ and
$B \setminus A$ are at most $2n / 3$, where $n$ is $|V(G)|$.  For a (balanced) separation
$(A, B)$, the set $A \cap B$ is called (balanced) \emph{separator},
and $|A \cap B|$ is the \emph{order} of~$(A, B)$.  The
\emph{separation number} of $G$, $\sn(G)$, is the minimum integer $k$ such that every subgraph of $G$ has a balanced separation of order $k$.
Note that, for any subgraph $G'$ of $G$, $\tw(G') \le \tw(G)$.
This implies that, for every graph~$G$,
$\sn(G) \le \tw(G) + 1$~\cite{dn-tgbs-JCTB19}.
On the other hand, Dvo\v{r}\'ak and Norin~\cite{dn-tgbs-JCTB19}
showed that, for every graph~$G$, $\tw(G) \le 15\sn(G)$.

\subparagraph*{Our contribution.}

Given an outer $k$-planar drawing~$\Gamma$ of a graph~$G$, let the
\emph{outer cycle} of~$G$ be the cycle that connects the vertices
of~$G$ in the order along the circle on which they lie in~$\Gamma$
(even if $G$ does not contain all edges of this cycle).
We introduce two simple methods to construct, given an outer $k$-planar
drawing~$\Gamma$ of a graph~$G$, a triangulation of the outer cycle
of~$G$ with the property that each edge of the triangulation is
crossed by at most $k$ edges of~$G$; see \cref{sec:triangulations}.
The resulting triangulations yield the following bounds; see
\cref{sec:applications}.
\begin{itemize}
\item We improve the upper bound of Wood and Telle
  \cite{wt-pdcng-NYJM07} regarding the treewidth of outer $k$-planar
  graphs from $3k+11$ to $1.5k + 2$ (\cref{thm:okp-tw1.5k}
        in \cref{sec:treewidth}).  Our proof is constructive and implies a practical algorithm; the tree
  decomposition that we obtain follows the weak dual of the triangulation.
  For outer $2$-planar graphs, our methods yield an upper bound of~4
  (\cref{thm:o2p-tw4}), which is tight due to~$K_5$.
\item Chaplick et al.~\cite{BeyondOuterplanarity} showed that,
  for every outer $k$-planar graph~$G$, %
  its separation number is at most $2k+3$.
  We improve this upper bound to $k + 2$; see \cref{sec:separation}.
\item We give new lower bounds of $k + 2$ for both the treewidth and
  the separation number of outer $k$-planar graphs; see
  \cref{sec:lower}.  Note that the latter bound is tight (\cref{thm:sn-lb-okp}).
\item Durocher et al.~\cite{durocher2023cops} recently proved that the
  cop number of general 1-planar graphs is not bounded, but the
  maximal 1-planar graphs have cop number at most~3.  Since it is
  known~\cite{jkt-crggf-CDM10} that for every graph~$G$,
  $\cop(G) \le \tw(G)/2+1$, Durocher et al.~\cite{durocher2023cops}
  observed that the treewidth bound of Wood and Telle yields, for
  every outer $k$-planar graph~$G$, that $\cop(G) \le 1.5k + 6.5$.
  They asked explicitly whether the multiplicative factor of $1.5$ can
  be improved.  We answer this question in the affirmative since our
  new treewidth bound immediately yields the better bound
  $\cop(G) \le 0.75k+1.75$.
\item For outer min-$k$-planar graphs, our triangulation improves the
  bound of Wood and Telle $3k+11$ slightly to $3k+1$ and gives a bound
  $2k + 1$ on the separation number; see \cref{sec:applications}.
\end{itemize}

\subparagraph*{Related results.}
A structurally similar type of result is known for \emph{$2$-layer
  $k$-planar} graphs.  A $2$-layer $k$-planar graph is a bipartite
graph that admits a $k$-planar straight-line drawing with
vertices placed on two parallel lines.
The maximal $2$-layer $0$-planar graphs (called \emph{caterpillars}) are
the maximal pathwidth-1 graphs~\cite{kinnersley1994,proskurowski1999}.
Angelini et al.~\cite{2LayerkPlanar} (\cite{alfs-2lkpg-GD20})
showed that $2$-layer $k$-planar graphs have pathwidth at most $k + 1$
and gave a lower bound of $(k + 3)/2$ for every odd number~$k$.
Hence, their result can be considered as a smooth
generalization of the caterpillar result.
Similarly, our treewidth bound $1.5k + 2$ for outer $k$-planar graphs
smoothly generalizes the treewidth-$2$ bound for maximal outerplanar graphs.

\section{Preliminaries}
\label{sec:preliminaries}

A \emph{convex drawing} $\Gamma$ of a graph is a straight-line drawing
where the vertices of the graph are placed on different points of a
circle, %
which we call the \emph{circle of~$\Gamma$}.
An \emph{outer $k$-planar drawing}~$\Gamma$ of a graph is a convex
drawing such that every edge crosses at most $k$ other edges.
A counterclockwise walk of the circle of $\Gamma$ yields a cyclic
order on the vertices of the graph.
We say that two sets on distinct four vertices $\{v_1,w_1\}$, and
$\{v_2,w_2\}$ are \emph{intertwined} in~$\Gamma$ if they are ordered
$\langle v_1,v_2,w_1,w_2 \rangle$ or $\langle v_1,w_2,v_2,w_1 \rangle$
in this cyclic order.
Observe that two edges $\{v_1,w_1\}$ and $\{v_2,w_2\}$ cross
in~$\Gamma$ if and only if they span four different vertices and are
intertwined.
In the remainder of this section, we make some simple observations
that will be helpful later; they can be skipped by more experienced
readers.
\begin{observation}\label{obs:moving-points-on-circle}
  Let $G$ be a graph.  If $\Gamma$ is an outer $k$-planar drawing
  of~$G$, then every convex drawing~$\Gamma'$ of $G$ with the same
  cyclic vertex order as in $\Gamma$ is also an outer $k$-planar
  drawing of~$G$.
\end{observation}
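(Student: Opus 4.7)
The plan is to deduce the observation directly from the crossing characterization stated just before it, namely that two edges $\{v_1,w_1\}$ and $\{v_2,w_2\}$ cross in a convex drawing if and only if they span four distinct vertices and are intertwined. The crucial point is that intertwined-ness is defined purely in terms of the cyclic order of the four endpoints on the circle, so it is invariant under any convex redrawing that preserves this cyclic order.

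Concretely, I would fix an arbitrary edge $e$ of $G$ and argue that the set of edges crossing $e$ in $\Gamma$ coincides with the set of edges crossing $e$ in $\Gamma'$. For any other edge $e'$ of $G$, whether $e$ and $e'$ cross in $\Gamma$ is equivalent to $e$ and $e'$ spanning four distinct vertices and being intertwined in the cyclic order induced by $\Gamma$. Since $\Gamma'$ induces the same cyclic order on $V(G)$ by assumption, the same equivalence characterizes crossings in $\Gamma'$, and so $e$ and $e'$ cross in $\Gamma$ if and only if they cross in $\Gamma'$.

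Consequently, the number of edges crossing $e$ is the same in both drawings. Because this holds for every edge $e$, and because $\Gamma$ is outer $k$-planar, every edge of $\Gamma'$ is also crossed by at most $k$ other edges. Together with the fact that $\Gamma'$ is convex by hypothesis, this shows that $\Gamma'$ is outer $k$-planar, as claimed.

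There is no real obstacle here: the entire content of the observation is that crossings in convex drawings are determined by the cyclic vertex order, which is made explicit in the paragraph preceding the statement. The proof is essentially a one-line invocation of the intertwined-edges characterization, and the main task is simply to phrase it cleanly.
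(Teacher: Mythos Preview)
Your proposal is correct and follows exactly the same approach as the paper: both argue that the set of crossing pairs depends only on the cyclic vertex order via the intertwined-edges characterization, so $\Gamma$ and $\Gamma'$ have identical crossings. The paper's version is simply more terse (a single sentence), but the content is the same.
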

\begin{proof}
  As the cyclic orders on vertices of~$G$ defined by~$\Gamma$ and
  $\Gamma'$ are the same, exactly the same pairs of edges cross
  in~$\Gamma$ and in~$\Gamma'$.
\end{proof}
In the definition of convex drawings, we could have allowed placing vertices on
arbitrary convex shapes (instead of a circle) and drawing edges as
curves (instead of straight-line segments)~--
as long as every curve is drawn inside the circle and no two curves
cross more than once.  Using curves will be sometimes convenient in
our proofs later.

\begin{observation}\label{obs:straight-edges}
  Let $G$ be a graph, let $\Gamma$ be an outer $k$-planar drawing
  of~$G$, and let $c$ be a curve that joins vertices $v$ and $w$
  of~$G$ and is contained inside the circle of~$\Gamma$.  If $c$
  crosses at most $l$ edges of~$\Gamma$, then the straight-line
  segment $\overline{vw}$ also crosses at most $l$ edges of~$\Gamma$.
\end{observation}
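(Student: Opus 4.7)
The plan is to prove the stronger statement that the set~$F$ of edges of~$\Gamma$ crossed by~$\overline{vw}$ is contained in the set~$F'$ of edges of~$\Gamma$ crossed by~$c$. Since $|F'| \le l$ by hypothesis, this yields $|F| \le l$, which is exactly the conclusion (noting that $\overline{vw}$, being a straight chord, meets every other chord in at most one point, so the number of edges it crosses equals~$|F|$).

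To identify~$F$ combinatorially, I would use that both $\overline{vw}$ and every edge of~$\Gamma$ are chords of the circle of~$\Gamma$, so that a chord $e = \{x,y\}$ belongs to~$F$ if and only if the pairs $\{v,w\}$ and $\{x,y\}$ are intertwined in~$\Gamma$ in the sense recalled just before the observation. In particular, whenever $e \in F$, the four points $v$, $w$, $x$, $y$ are pairwise distinct.

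For the inclusion $F \subseteq F'$, I would fix $e = \{x,y\} \in F$ and argue topologically. Viewed as a chord, $e$ splits the closed disk bounded by the circle of~$\Gamma$ into two closed regions whose common boundary is~$e$ itself. The intertwining condition places~$v$ and~$w$ on opposite open arcs between~$x$ and~$y$, so they lie in the interiors of the two different regions. Because~$c$ is a continuous curve from~$v$ to~$w$ contained in the closed disk, connectedness forces~$c$ to meet the common boundary~$e$; as $v$ and~$w$ are not endpoints of~$e$, such an intersection is an honest crossing in the relative interior of~$e$, so $e \in F'$.

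The argument is essentially a single topological fact---that a chord separates the closed disk---so I do not expect a real obstacle. The only minor points to verify are that the four vertices are distinct, ensuring $\{v,w\} \cap e = \emptyset$, and that we only need one point of intersection per edge, so the possibility that~$c$ meets~$e$ several times does not affect the count that turns $F \subseteq F'$ into the desired inequality.
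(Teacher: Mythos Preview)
Your proposal is correct and takes essentially the same approach as the paper: you show that every edge crossed by the chord $\overline{vw}$ (equivalently, every edge intertwined with $\{v,w\}$) must also be crossed by~$c$, which is exactly the paper's one-line argument. Your version simply spells out the topological reason---that a chord separates the closed disk---which the paper leaves implicit.
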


\begin{proof}
  As $c$ must cross every edge $\{v',w'\}$ of~$G$ that is intertwined
  with $\{v,w\}$, we get that $c$ must have at least one crossing with
  every edge crossed by the straight-line segment~$\overline{vw}$.
\end{proof}

A graph~$G$ is \emph{maximal outer $k$-planar} if $G$ is outer
$k$-planar and $G$ does not contain any vertex pair $e \in V^2(G)
\setminus E(G)$ such that the graph $(V(G), E(G) \cup \{e\})$ is still
outer $k$-planar.
Since removing edges increases neither treewidth nor separation number,
we are interested in properties of maximal outer $k$-planar graphs.

\begin{observation}\label{obs:outer-cycle}
  If $G$ is a maximal outer $k$-planar graph with at least three
  vertices, then, in every outer $k$-planar drawing of~$G$, the outer
  face is bounded by a simple cycle.
\end{observation}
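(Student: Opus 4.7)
The plan is to show that, under maximality, every pair of circularly consecutive vertices is joined by an edge; then, in any outer $k$-planar drawing, these edges form the boundary of the outer face, which is a simple cycle on all $n \ge 3$ vertices.

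First I would fix an arbitrary outer $k$-planar drawing $\Gamma$ of $G$ and let $v,w$ be two vertices that are consecutive in the cyclic order on the circle of~$\Gamma$. The goal is to argue that $\{v,w\} \in E(G)$. Suppose towards a contradiction that it is not. Consider the straight-line segment $\overline{vw}$ drawn inside the circle of~$\Gamma$. Any edge of $G$ that crosses $\overline{vw}$ would have to be intertwined with $\{v,w\}$ in the cyclic order on vertices, but since $v$ and $w$ are consecutive on the circle, no vertex lies strictly between them on one of the two arcs, so no pair of vertices of~$G$ can be intertwined with $\{v,w\}$. Hence $\overline{vw}$ crosses no edge of~$\Gamma$, and adding it to~$\Gamma$ yields a convex drawing of $(V(G), E(G) \cup \{v,w\})$ in which the crossing number of every existing edge is unchanged and the new edge is uncrossed. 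This drawing is still outer $k$-planar, contradicting the maximality of~$G$.

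Since this holds for every consecutive pair on the circle, the $n$ edges joining consecutive vertices are all present in~$G$, and in the drawing~$\Gamma$ they form a simple polygon inscribed in the circle. Because every other edge of $G$ is drawn as a chord inside this polygon, the outer face of~$\Gamma$ is bounded exactly by this cycle, which is a simple cycle through all~$n$ vertices. No subtle step is expected here; the only care needed is to note that "consecutive on the circle" prevents intertwining, which is precisely why the added segment is uncrossed, and to use $n \ge 3$ so that the collection of such edges indeed forms a (simple) cycle rather than a single edge or a degenerate configuration.
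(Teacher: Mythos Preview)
Your argument is correct and follows essentially the same approach as the paper: assume two circularly consecutive vertices are non-adjacent, observe that their pair is not intertwined with any edge, and derive a contradiction to maximality. You add a short closing sentence spelling out why the resulting edges bound the outer face, which the paper leaves implicit, but the core idea is identical.
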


\begin{proof}
  Let $v$ and $w$ be two vertices that are consecutive in the cyclic
  order defined by some outer $k$-planar drawing~$\Gamma$ of~$G$.  For
  a contradiction, suppose that $v$ and $w$ are not adjacent.  As $v$
  and $w$ are consecutive on the circle of~$\Gamma$, the set $\{v,w\}$
  is not intertwined with any edge in~$\Gamma$.  Thus, $\Gamma$ can be
  extended to an outer $k$-planar drawing of $G+\{v,w\}$, which
  contradicts the maximality of~$G$.
\end{proof}

We remark that
\cref{obs:moving-points-on-circle,obs:straight-edges,obs:outer-cycle}
analogously hold for outer min-$k$-planar graphs.

\section{Triangulations}
\label{sec:triangulations}

In this section, we present two simple strategies to triangulate maximal outer $k$-planar graphs.
These triangulations will serve as tools to construct tree decompositions and balanced separators in \cref{sec:applications}.
We assume that some outer $k$-planar drawing~$\Gamma$ of a graph $G$ on $n$ vertices is given.
During the triangulation procedures, we will modify the drawing, but we will never change the cyclic order of the vertices.
Thus, let $v_1,v_2,\dots,v_n$ be the vertices of~$G$ in the cyclic
(counterclockwise) order defined by~$\Gamma$.
As we focus on maximal graphs, by \cref{obs:outer-cycle}, we have that
$\langle v_1,v_2,\dots,v_n,v_1 \rangle$ is a cycle in $G$.
Our goal is to construct a triangulation~\T of this cycle such that
the edges of~\T cross only a limited number of edges of~$G$.
We refer to the edges of~\T as \emph{links} in order
to distinguish them from the edges of~$G$.  Every edge of the outer
cycle is a link in~\T; we call these links \emph{outer} links.
We will select a set of $n-3$ other pairwise non-intertwined pairs of
vertices as \emph{inner} links that, together with the outer links,
form a triangulated $n$-gon.
Note that some of the inner links may coincide with edges of~$G$.
We say that a link is \emph{pierced} by an edge of~$G$ if their
endpoints are intertwined in the cyclic order.
The \emph{piercing number} of a link is the number of edges of~$G$
that pierce the link.
In particular, the piercing number of
outer links is~0, and if a link coincides with an edge, then its
piercing number is at most $k$ by the outer $k$-planarity
of~$\Gamma$.  We define the \emph{edge piercing number} of~\T to be
the maximum piercing number of any link of~\T.
In \cref{sec:applications} we show how to use triangulations with
small edge piercing number.

Our \emph{splitting procedure} starts by declaring the
link~$\{v_1,v_n\}$ active and considers all other vertices to lie on
the \emph{right} side of the link. See \cref{fig:algo-steps} for illustration.
Clearly, the link~$\{v_1,v_n\}$ is
not pierced.  In each recursive step, the input is an active link
$\{v_i,v_k\}$ with $1 \le i < k \le n$ (promised to be pierced at most
some limited number of times), along with a distinguished right side
$v_{i+1},\dots,v_{k-1}$, where currently no inner links have been
selected apart from $\{v_i,v_k\}$.  (Vertices
$v_1,\dots,v_{i-1},v_{k+1},\dots,v_n$ form the \emph{left} side.)  The
goal is to pick, among the vertices on the right side, a \emph{split
  vertex}~$v_j$ such that the new links $\{v_i,v_j\}$ and
$\{v_j,v_k\}$ are pierced by at most some limited number of edges.  The new links are
added to~\T (completing the triangle $v_i v_j v_k$) and become active.
The link $\{v_i,v_k\}$ ceases to be active.  The split gives rise to
two new splitting instances $\{v_i,v_j\}$ and $\{v_j,v_k\}$, which are
then solved recursively.

\begin{figure}
        \centering
	\begin{subfigure}[b]{.195\textwidth}
		\centering
		\includegraphics[page=1]{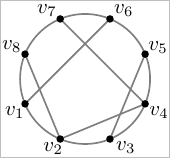}
		\subcaption{input graph}
		\label{fig:algo-steps-graph}
	\end{subfigure}
        \hspace{.5mm}
	\begin{subfigure}[b]{.17\textwidth}
		\centering
		\includegraphics[page=2]{algo-steps}
		\subcaption{\nolinenumbers{}step 1}
		\label{fig:algo-step1}
	\end{subfigure}
        \hspace{.5mm}
	\begin{subfigure}[b]{.17\textwidth}
		\centering
		\includegraphics[page=3]{algo-steps}
		\subcaption{\nolinenumbers{}step 2}
		\label{fig:algo-step2}
	\end{subfigure}
        \hspace{.5mm}
	\begin{subfigure}[b]{.17\textwidth}
		\centering
		\includegraphics[page=4]{algo-steps}
		\subcaption{\nolinenumbers{}step 3}
		\label{fig:algo-step3}
	\end{subfigure}
        \hspace{.5mm}
	\begin{subfigure}[b]{.195\textwidth}
		\centering
		\includegraphics[page=5]{algo-steps}
		\subcaption{\nolinenumbers{}triangulation}
		\label{fig:algo-steps-triangulation}
	\end{subfigure}

	\caption{Steps of the splitting procedure.  The active edge is
          purple, the (new/old) links are (dark/light) blue, the
          left/right side of the active edge is red/green.  The
          split vertex is big.}
	\label{fig:algo-steps}
\end{figure}

The set of edges of~$G$ that pierce the active link are the
\emph{piercing edges}.
Note that every piercing edge has one left and one right endpoint.
Intuitively, the edges with two left endpoints are of no concern to
the splitting procedure.  The next lemma allows us to push all
crossings among piercing edges to the left of $\{v_i,v_k\}$.

\begin{lemma}\label{lem:push-the-crossings}
    Given an outer $k$-planar drawing~$\Gamma$ of a graph $G$ and an active link $\{v_i,v_k\}$, there exists an outer $k$-planar drawing $\Gamma'$ with the same cyclic order as~$\Gamma$, and all crossings among the edges piercing the active link are drawn to the left of the active link.
\end{lemma}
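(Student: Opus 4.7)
The plan is to construct $\Gamma'$ by applying \cref{obs:moving-points-on-circle}: keeping the cyclic order intact, I would slide every vertex on the left side of the active link into a tiny arc around a single point $p$ chosen on the left arc of the circle. Intuitively, once all left endpoints of piercing edges are nearly coincident at~$p$, each piercing edge is essentially the chord from $p$ to its (fixed) right endpoint, so any two piercing edges can only cross very close to $p$, which itself lies strictly on the left side of the chord $v_i v_k$.

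More concretely, first I would pick a point $p$ on the open left arc (the arc from $v_k$ counterclockwise to $v_i$ passing through the left-side vertices), distinct from every vertex of~$G$. Then, for a parameter $\epsilon > 0$, I would define $\Gamma'_\epsilon$ by repositioning each left-side vertex to a point within arc-distance $\epsilon$ of $p$, keeping their relative order along the left arc unchanged. By \cref{obs:moving-points-on-circle}, $\Gamma'_\epsilon$ is an outer $k$-planar drawing of $G$ with the same cyclic vertex order as~$\Gamma$; in particular, the set of piercing edges of the active link and the combinatorial pattern of crossings among them are both identical to those in~$\Gamma$.

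The technical core is to show that, for every pair $e_j = \{u_j, w_j\}$ and $e_{j'} = \{u_{j'}, w_{j'}\}$ of piercing edges that cross, their geometric crossing point $x^{\epsilon}_{j,j'}$ in $\Gamma'_\epsilon$ tends to $p$ as $\epsilon \to 0$. I would prove this by a compactness argument: as $\epsilon \to 0$, the segments $u_j^\epsilon w_j$ and $u_{j'}^\epsilon w_{j'}$ converge uniformly to the chords $p w_j$ and $p w_{j'}$; these two chords are distinct (three distinct points on a circle are never collinear, since a line meets a circle in at most two points) and intersect only at~$p$. Any accumulation point of $x^{\epsilon}_{j,j'}$ must lie on both limit chords, and hence must coincide with~$p$.

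Because $p$ lies on the left arc, it sits at a positive distance from the chord $v_i v_k$, so a sufficiently small neighborhood of $p$ inside the disk is contained in the open left half-disk. Since there are only finitely many crossing pairs of piercing edges, I can choose $\epsilon$ small enough that every such crossing point in $\Gamma'_\epsilon$ falls into this neighborhood, and therefore to the left of the active link, as required. The only delicate step is the compactness argument just above; every other step is a direct application of \cref{obs:moving-points-on-circle}.
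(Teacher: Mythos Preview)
Your proof is correct and follows essentially the same strategy as the paper: squeeze all left-side vertices into a tiny arc near a single point on the left side, invoke \cref{obs:moving-points-on-circle} to retain outer $k$-planarity, and argue that crossings among piercing edges then cluster near that point and hence lie to the left of the active link. The paper additionally normalizes the positions of $v_i$, $v_k$, and the right-side vertices (placing $v_i,v_k$ at $(0,\pm1)$ and spreading the right side evenly) and argues more informally, whereas you leave those vertices in place and make the limiting step explicit via compactness; the core idea is identical.
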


\begin{proof}
  We draw the vertices of $G$ on the unit circle, keeping the cyclic
  order defined by~$\Gamma$.  We set $\varepsilon>0$ to a sufficiently
  small value.  Then we place $v_i$ at $(0,-1)$, $v_k$ at $(0,1)$, all
  the vertices on the left side of the link within arc
  distance~$\varepsilon$ of $(-1, 0)$ and
  spread the vertices from the right side of the link evenly along the right semicircle, see
  \cref{fig:pushing-crossings}.
    
  \begin{figure}
    \centering
    \includegraphics{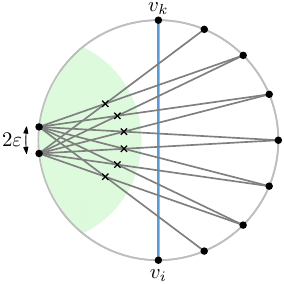}
    \caption{We can push crossings among the edges that pierce the
      active link $\{v_i,v_k\}$ to the left.  We show only the top- and bottommost endpoints on the left.  Crossings
      occur exclusively in the green area, which can be made arbitrarily
      small.}
    \label{fig:pushing-crossings}
  \end{figure}
   
  Now, \cref{obs:moving-points-on-circle} yields that the new drawing
  is an outer $k$-planar drawing of~$G$ with the same cyclic order.
  As every pair of intertwined piercing edges has their left
  endpoints in arc distance at most $\varepsilon$ from $(-1,0)$ and
  their right endpoints in arc distance at least $\pi / n$ from each
  other, the crossing of two such segments is to the left of the
  active link.
\end{proof}

\cref{lem:push-the-crossings} gives us an equivalent drawing of~$G$ in
which piercing edges cross the active link in the same order from
bottom to top as their right endpoints occur in the cyclic order.  This
allows us to draw the piercing edges without crossings on the right
side of the active link.

Now, we are ready to present our first and basic triangulation method.
Asymptotically, it yields a worse bound than
\cref{lem:triangulation-strong}, but it serves as a gentle
introduction into our techniques, and it will be used for small values
of $k$ and for outer min-$k$-planar graphs later.

\begin{lemma}
  \label{lem:triangulation-weak}
  For $k \ge 1$, every outer $k$-planar drawing of a maximal outer
  $k$-planar graph admits a triangulation of the outer cycle with edge
  piercing number at most $2k-1$.
\end{lemma}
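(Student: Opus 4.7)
The plan is to run the splitting procedure starting from the active link $\{v_1,v_n\}$ and show, by induction on the size of the right side, that every link it creates has piercing at most $2k-1$. Since $\{v_1,v_n\}$ has piercing~$0$, it suffices to prove the following inductive step: given an active link $\{v_a,v_b\}$ with piercing $p\le 2k-1$, I can pick a split vertex $v_j$ in the right side $R=\{v_{a+1},\dots,v_{b-1}\}$ so that both new links $\{v_a,v_j\}$ and $\{v_j,v_b\}$ still have piercing at most $2k-1$. I would split the analysis into two cases depending on whether $p\ge 1$ or $p=0$.

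\emph{Case $p\ge 1$.} By \cref{lem:push-the-crossings} I may assume that all crossings among the piercing edges occur on the left side, so they can be ordered by their right endpoints as $e_1,\ldots,e_p$ with right endpoints $v_{r_1},\ldots,v_{r_p}$. I would set $v_j := v_{r_m}$ for $m := \lceil p/2 \rceil$, and write $A=\{v_{a+1},\dots,v_{j-1}\}$ and $R'=\{v_{j+1},\dots,v_{b-1}\}$. The edges piercing $\{v_a,v_j\}$ fall into two groups: (i) the piercing edges of $\{v_a,v_b\}$ whose right endpoint lies in $A$, of which there are at most $m-1\le k-1$; and (ii) edges with one endpoint in $A$ and the other in $R'\cup\{v_b\}$. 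A short cyclic-order check shows that every edge of type~(ii) is intertwined with the chosen piercing edge $e_m$, so outer $k$-planarity bounds them by $k$. This yields a piercing of at most $(k-1)+k=2k-1$, and the symmetric computation (using $|P_R|\le\lfloor p/2\rfloor\le k-1$) gives the same bound for $\{v_j,v_b\}$.

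\emph{Case $p=0$.} Here there is no piercing edge of the active link to drive the crossing argument, so I would instead exploit the maximality of $G$. By \cref{obs:outer-cycle}, $\{v_a,v_{a+1}\}\in E(G)$, so the set of indices $j$ with $a<j<b$ and $\{v_a,v_j\}\in E(G)$ is non-empty; let $j^*$ be its maximum and set $v_j:=v_{j^*}$. Since the link $\{v_a,v_j\}$ coincides with an edge of $G$, its piercing is at most $k$ by outer $k$-planarity. On the other side, $p=0$ excludes edges from $R'$ to the left side and the maximality of $j^*$ excludes edges from $v_a$ into $R'$, so the piercing of $\{v_j,v_b\}$ reduces to edges between $A$ and $R'$; each such edge crosses the graph edge $\{v_a,v_j\}$, so again at most~$k$.

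The main obstacle is the cyclic-order verification in Case $p\ge 1$: one must check that every edge contributing to the new piercings of either half really crosses the chosen piercing edge~$e_m$. This is where the \emph{middle} choice of $m$ (to balance $|P_L|$ and $|P_R|$) together with \cref{lem:push-the-crossings} (which clears crossings among piercing edges on the right) combine to turn the global outer $k$-planarity of $\Gamma$ into a local bound on new piercings; any more extreme choice of $v_j$ would unbalance the two halves and break the $k-1$ bound.
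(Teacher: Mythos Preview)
Your proof is correct and follows essentially the same approach as the paper: the same splitting procedure, the same choice of the median piercing edge in the pierced case, and the same ``farthest neighbour of $v_a$'' choice in the unpierced case. The only cosmetic differences are that the paper takes $m=\lceil(p+1)/2\rceil$ rather than $\lceil p/2\rceil$ (either works) and phrases the counting via curves $c',c''$ traced along $\lambda$ and $e_m$ instead of your direct intertwining check, but the underlying argument is identical.
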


\begin{proof}
    We prove the lemma by providing an appropriate splitting procedure
    that recursively constructs the triangulation.
    Let~$\Gamma$ be an outer $k$-planar drawing of the given maximal
    outer $k$-planar graph~$G$.
    We prove by induction on the number of steps of the splitting
    procedure that every active link has piercing number at most $2k-1$.
    The claim is true in the beginning, as we start with the link
    $\{v_1,v_n\}$, which is not pierced at all.

    Now we consider a later step of the procedure.  Let
    $\lambda=\{v_i,v_k\}$ denote the active link.  For brevity, set
    $u=v_i$ and $v=v_k$.  There are two cases.
    
    In the first case, $\lambda$ is not pierced by any edge of~$G$.
    Observe that $\{v_i,v_{i+1}\}$ is an edge, so~$u$ has at least one
    neighbor on the right side of $\lambda$ in~$G$.  We select the
    split vertex, $w$, to be the neighbor of~$u$ from the right side
    of $\lambda$ that is closest to~$v$ in the circular
    order; see \cref{fig:okp-proof-1}.  In this case, the link
    $\{u,w\}$ coincides with an edge of~$G$, and thus is pierced by at
    most $k$ edges.  To see that the link $\{w,v\}$ is also pierced by
    at most $k$ edges, let $c''$ be a curve that starts in $w$, goes
    along $\{w,u\}$ towards $u$ on the right side (in the direction of
    walking), then follows $\{u,v\}$ on the right side, and ends
    at~$v$.  This curve crosses only the edges that pierce the link
    $\{u,w\}$, so at most $k$ edges pierce the link $\{w,v\}$ by
    \cref{obs:straight-edges}.

    In the second case, $\lambda$ is pierced by edges of~$G$.  Using
    \cref{lem:push-the-crossings}, we modify the drawing~$\Gamma$ so
    that the piercing edges cross~$\lambda$ in the same bottom-to-top
    order as their right endpoints.  Let $e_1, e_2, \dots, e_\ell$ denote
    the piercing edges, and let $w_1, w_2, \dots, w_\ell$ denote their
    respective, not necessarily different, right endpoints in the
    order from bottom to top.  By induction, we have %
    $\ell \le 2k-1$.  Then we pick the middle edge $e_{j}$, where
    $j = \lceil (\ell+1)/2 \rceil$, and set the split vertex $w = w_j$.

    \begin{figure}
    \begin{subfigure}[b]{.37\textwidth}
        \centering
        \includegraphics[page=1]{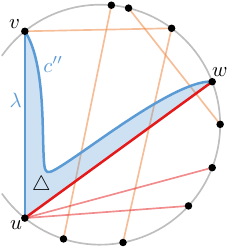}
        \subcaption{constructing the new 
                  triangle~$\triangle$ if $\lambda$ is not pierced by any other edge of $G$.}
        \label{fig:okp-proof-1}
    \end{subfigure}
    \hfill
    \begin{subfigure}[b]{.6\textwidth}
        \centering
        \includegraphics[page=2]{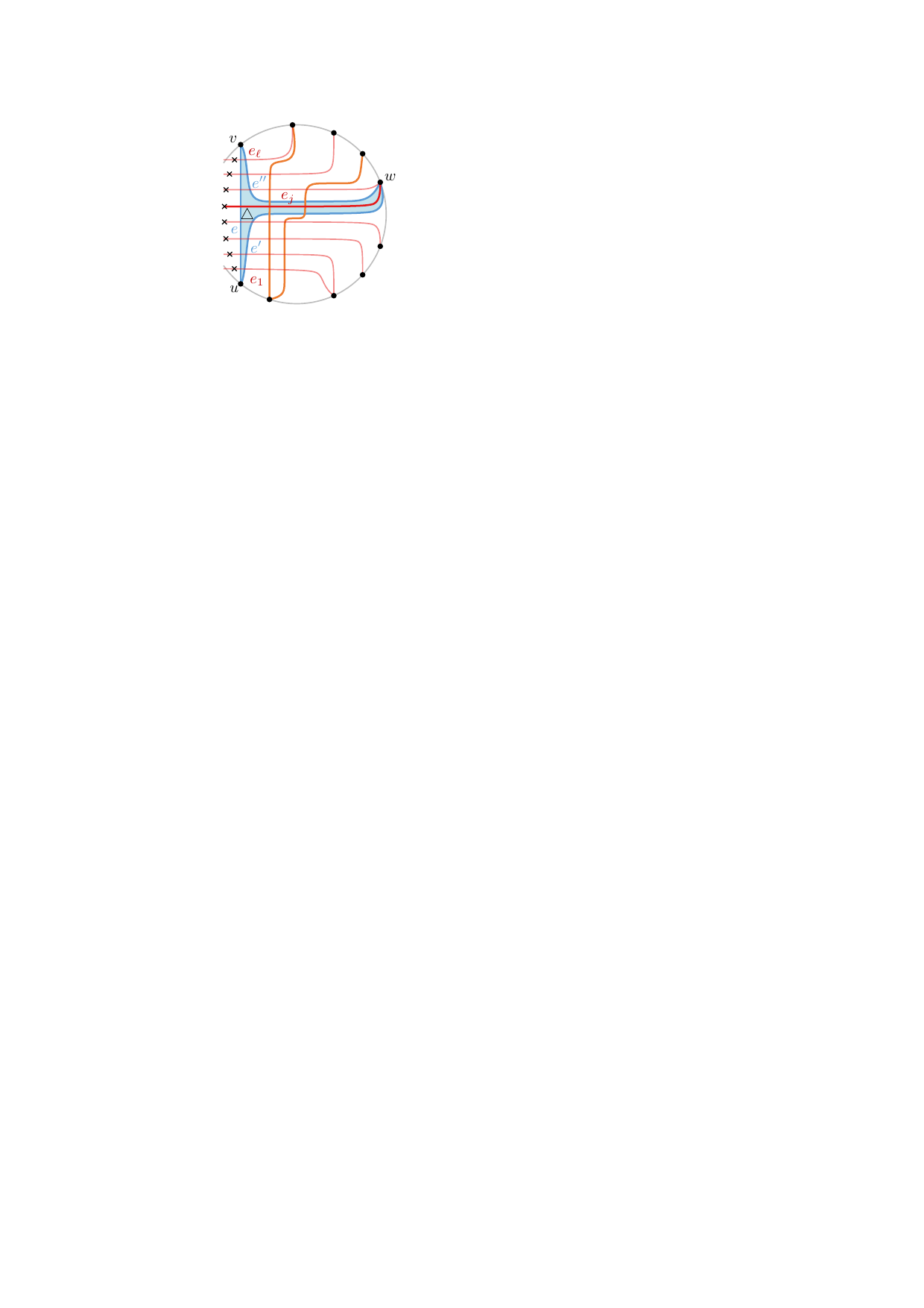}
        \subcaption{\nolinenumbers{}constructing the new 
                  triangle~$\triangle$
                  bounded by~$\lambda$, $c'$, and~$c''$;
                  the orange edges that pierce both~$c'$
                  and~$c''$ also cross the edge~$e_j$.}
        \label{fig:okp-proof-2}
    \end{subfigure}
    \caption{Constructing a triangulation~\T with
          edge piercing number at most $2k-1$.}
    \label{fig:okp-proof}
    \end{figure}  

    To bound the number of edges that pierce the link $\{u, w\}$, we draw the curve $c'$ as in \cref{fig:okp-proof-2}.
    We start the curve $c'$ at $u$ and follow $\lambda$ until before it crosses $e_j$, then follow $e_j$ until we arrive at $w$.
    The curve $c'$ crosses at most $\lceil (\ell-1)/2 \rceil$ edges
    that pierce $\lambda$ and at most~$k$ edges that pierce~$e_j$.
    Thus, there are at most $\lceil (\ell-1)/2 \rceil + k \leq \lceil (2k-2)/2 \rceil + k = 2k-1$ edges that pierce the link $\{u,w\}$. 
    We can argue symmetrically for the link $\{w,v\}$; see curve~$c''$ in \cref{fig:okp-proof-2}.

    As we have bounded the piercing number of every active link during
    the triangulation procedure by $2k-1$, we get the desired bound of $2k-1$
    for the edge piercing number of the constructed triangulation.
\end{proof}

Next, we present our refined triangulation method, where we consider
an additional edge to get a better bound for the edge piercing number.

\begin{lemma}
  \label{lem:triangulation-strong}
  For every $k \ge 1$, every outer $k$-planar drawing of a maximal
  outer $k$-planar graph admits a triangulation of the outer cycle
  with edge piercing number at most $k$.
\end{lemma}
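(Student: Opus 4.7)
The plan is to mirror the induction of \cref{lem:triangulation-weak} while tightening the invariant to: every active link has piercing number at most~$k$. The base case (the initial link $\{v_1,v_n\}$, piercing~$0$) and the subcase where the active link is not pierced ($\ell=0$, handled as in Case~1 of \cref{lem:triangulation-weak} via the rightmost neighbor of~$v_i$) carry over verbatim. So the work is in the subcase $\ell \ge 1$, in which the previous proof loses a factor of roughly two.

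For $\ell \ge 1$, I first apply \cref{lem:push-the-crossings} and order the piercing edges $e_1,\dots,e_\ell$ (with $\ell \le k$ by induction) bottom-to-top along~$\lambda$, with right endpoints $w_1,\dots,w_\ell$. As in \cref{lem:triangulation-weak}, I pick a split vertex $w=w_j$ and build the curve $c'$ that first follows $\lambda$ up to the crossing with $e_j$ and then follows $e_j$ to $w_j$, together with a symmetric curve $c''$ from~$w_j$ to~$v_k$. The key observation for the tighter bound is that every piercing edge $e_r$ with $r<j$ that also crosses $e_j$ is met by $c'$ twice but contributes only once to the count of \emph{distinct} edges, so by \cref{obs:straight-edges} the piercing number of $\{v_i,w_j\}$ is at most $(j-1)+|X_j|-|A_j|$, where $X_j$ is the set of $G$-edges crossing~$e_j$ (so $|X_j|\le k$) and $A_j\subseteq X_j$ consists of the piercing edges of~$\lambda$ below~$e_j$ that cross~$e_j$. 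The refinement brings in an \emph{additional edge}~$e^\star$ (either incident to~$w_j$ or crossing~$e_j$ close to~$w_j$) that lets me shortcut the costlier portion of $c'$ or $c''$: because $e^\star$ itself has at most~$k$ crossings, routing the curve through $e^\star$ absorbs the leftover slack and drives the distinct count down to~$k$.

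The hard part will be selecting~$j$ and the additional edge~$e^\star$ so that \emph{both} $c'$ and $c''$ simultaneously reach the bound~$k$, particularly in the extremal case $\ell=k$ in which no slack is available. I expect a case analysis based on whether the crossings among the piercing edges concentrate below or above~$e_j$ (equivalently, whether $|A_j|\ge|B_j|$, with $B_j$ defined symmetrically): in one case, $c'$ is automatically bounded by~$k$ and $e^\star$ is used to tame~$c''$, and symmetrically in the other case. Once both bounds hold at one split, the inductive invariant is preserved, and the splitting procedure produces a triangulation of the outer cycle whose edge piercing number is at most~$k$.
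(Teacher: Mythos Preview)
Your framework is right---the inductive splitting procedure with the strengthened invariant, the base case, and the $\ell=0$ subcase all carry over exactly as you say, and you correctly sense that an auxiliary edge beyond $e_j$ is needed to beat the $2k-1$ bound. But the concrete choices you propose do not work, and the gap is not just missing details.

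First, you commit to $w=w_j$ as the split vertex. The paper does not. It instead looks at the \emph{first} crossing $x$ of $e_j$ to the right of $\lambda$, calls the crossing edge $\hat e$, and takes the split vertex to be an endpoint of $\hat e$ (when $\hat e$ is disjoint from $\lambda$, the endpoint on the side of $x$ with at most $\lfloor(k-1)/2\rfloor$ further crossings). The reason this works is that the route $\lambda \to e_j \to \hat e$ costs at most $\lceil(\ell-1)/2\rceil$ along $\lambda$, zero along $e_j$ (precisely because $x$ is the first crossing), and at most $\lfloor(k-1)/2\rfloor$ along $\hat e$, for a total of at most $k-1$; the other curve pays one extra for crossing $e_j$ itself, reaching $k$. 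If instead you keep $w=w_j$, you must traverse $e_j$ all the way to $w_j$, and the crossings on that segment are uncontrolled; your double-counting bound $(j-1)+|X_j|-|A_j|$ can still be $2k-1$ (take $j-1$ near $k$, $|X_j|=k$, $|A_j|=0$).

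Second, your auxiliary edge $e^\star$ is described as ``incident to $w_j$ or crossing $e_j$ close to $w_j$'', which is the wrong end: to keep the $e_j$-portion of the curve crossing-free you need the crossing closest to $\lambda$, not closest to $w_j$. And once you switch to that crossing, there is no reason to land at $w_j$; you should stop at an endpoint of $\hat e$, which is exactly the paper's move. The residual case analysis is then not $|A_j|$ versus $|B_j|$ but whether $\hat e$ shares an endpoint with $\lambda$ (and which one), and that distinction drives the final arithmetic.
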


\begin{proof}
    The proof is similar to the proof of
    \cref{lem:triangulation-weak}, and we use the same terminology.
    Let again~$\Gamma$ be an outer $k$-planar drawing of the given
    maximal outer $k$-planar graph~$G$.
    We present a splitting procedure and show that every active link
    is pierced by at most~$k$ edges.  For the base of the induction,
    observe that the link $\{v_1,v_n\}$ has no piercing edges.
    Now, let $\lambda = \{v_i,v_k\}$ be the active link and set $u=v_i$ and $v=v_k$.
    If $\lambda$ has no piercing edges in~$\Gamma$, we proceed as in
    the proof of \cref{lem:triangulation-weak} and pick the
    split vertex to be the neighbor of $u$ from the right
    side of $\lambda$ that is closest to $v$.
    Otherwise, $\lambda$ is pierced by edges of $G$.
    Using \cref{lem:push-the-crossings}, we assume that the piercing edges cross the active link in the same bottom-to-top order as their right endpoints.
    Now, let $e_1,e_2,\ldots,e_\ell$ be the piercing edges, and let
    $w_1, w_2, \dots, w_\ell$ be their respective, not necessarily
    different, right endpoints in the order from bottom to top.
    We pick the middle edge~$e_j$ with $j=\lceil(\ell+1)/2\rceil$.
    If $e_j$ has no crossings on the right side of the active link,
    then we pick~$w_j$ to be the split vertex.
    The two curves that start in~$w_j$, follow~$e_j$ to~$e$,
    and then follow~$\lambda$ to either $u$ or $v$ have
    at most $\lceil(\ell-1)/2\rceil \le k$ crossings each, so we can
    proceed.
    Now, let $x$ denote the first crossing on the edge $e_j$ that occurs
    on the right side of $\lambda$, and let $\hat e$ denote the edge
    that crosses~$e_j$ at $x$; see \cref{fig:okp-1.5k-proof-1}.

    \begin{figure}
        \begin{subfigure}[b]{.41\textwidth}
          \centering
          \includegraphics[page=1]{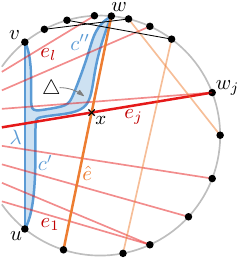}
          \subcaption{$\lambda$ and $\hat e$ do not share an
            endpoint; the new triangle~$\triangle$ is bounded
            by~$\lambda$, $c'$, and~$c''$.}
          \label{fig:okp-1.5k-proof-1}
        \end{subfigure}
        \hfill
        \begin{subfigure}[b]{.4\textwidth}
          \centering
          \includegraphics[page=2]{okp-1.5k-proof}
          \subcaption{$\lambda$ and $\hat e$ share the endpoint $u$;
            the new triangle~$\triangle$ is bounded by~$\lambda$,
            $\hat e$, and~$c''$.}
          \label{fig:okp-1.5k-proof-2}
        \end{subfigure}
        \caption{Constructing a triangulation \T with edge
          piercing number $k$.}
        \label{fig:okp-1.5k-proof}
    \end{figure}

    We first consider the case where $\hat e$ and $\lambda$ are disjoint.
    Let $a$ and $b$ be the numbers
    of crossings on $\hat e$ on the two sides of $x$.
    As $a+b+1 \le k$, we get that $\min \{a,b\} \le \lfloor (k-1)/2 \rfloor$.
    We select the split vertex $w$ to be the endpoint of $\hat e$ on
    the side that has fewer, i.e., $\min\{a,b\}$, crossings along~$\hat e$.
    We now assume $w_j < w < v$. The case $u < w < w_j$ is symmetric.

    We bound the piercing number of the link $\{v, w\}$.  To this end,
    we start tracing a curve $c''$ at $v$ and follow the
    link~$\lambda$ until just before it crosses $e_j$, then we
    follow~$e_j$ until just before it crosses $\hat e$, and then we
    follow~$\hat e$ until we arrive at~$w$; see \cref{fig:okp-1.5k-proof-1}.
        In the first part, the curve crosses at most $\lceil(\ell-1)/2\rceil \le \lceil(k-1)/2\rceil$ edges.
        The curve does not cross any edge in the middle part, as the
        crossing between $e_j$ and $\hat e$ is the first crossing
        along $e_j$ starting from~$\lambda$.
        In the last part, the curve crosses at most $\min\{a, b\} \le \lfloor (k-1)/2 \rfloor$ edges.
    Thus, by \cref{obs:straight-edges}, the piercing number of the link $\{v, w\}$ is at most $\lceil(k-1)/2\rceil + \lfloor (k-1)/2 \rfloor \le k-1$.

    To bound the piercing number of the link $\{u, w\}$, we start
    tracing a curve~$c'$ at~$u$, follow the link~$\lambda$ until just
    after it crosses $e_j$ and then follow the curve~$c''$ to~$w$, see
    \cref{fig:okp-1.5k-proof-1}.  The curve~$c'$ crosses the same set
    of edges as the curve~$c''$ plus the edge~$e_j$, so $c''$ has at
    most $k$ crossings.
    Thus, the piercing number of the link $\{u, w\}$ is at most $k$.

    Now we consider the last case, where $\hat e$ and $\lambda$ have a
    common endpoint.  In this case, we choose the other endpoint of
    $\hat e$ as $w$ (see \cref{fig:okp-1.5k-proof-2}) and analyse the
    two subcases.

    We first assume that $u$ is the common endpoint.  Then the link
    $\{u, w\}=\hat{e}$ is an edge of~$G$, so it has at most $k$ crossings.
    To bound the piercing number of the link $\{v, w\}$, we argue with
    the curve~$c''$ that goes from~$v$ to~$w$ as defined in the
    previous case.  Observe that
    the edge $\{u, w\}$ has exactly $j$ crossings on the part
    from~$u$ to~$x$ (including~$x$).  Therefore, we can argue as before
    and conclude that the piercing number of the edge $\{v, w\}$ is
    at most $(\ell - j) + (k - j) = k - (2j - \ell)$.
    This is bounded from above by $k-1$
    since $2j - \ell = 2 \lceil(\ell+1)/2\rceil - \ell$ is $1$
    if $\ell$ is odd and $2$ otherwise.
    
    Similarly, if $v$ is the common endpoint, then the link
    $\{v,w\}=\hat{e}$ is an edge of~$G$, so it has at most $k$ edges,
    and the piercing number of the link $\{u, w\}$ is at most
    $(j-1) + (k - (\ell - j + 1)) = k + (2j - \ell) - 2 \le k$.
\end{proof}

The \emph{piercing number} of a triangle of~\T is the sum of the
piercing numbers of the links that form the sides of the triangle.
We define the \emph{triangle piercing number} of~\T to be the maximum
piercing number of any triangle of~\T.
Note that the triangle piercing number of a triangulation is at most
three times its edge piercing number, but it can be smaller.
If $k$ is odd, we obtain a slightly stronger result;
if $k$ is odd, for every triangle $t$ that we create in
the proof of \cref{lem:triangulation-strong}, at least one edge of~$t$
has at most $k-1$ crossings.

\begin{lemma}\label{lem:triangulation-strong-odd}
  For every odd $k \ge 1$, 
  every outer $k$-planar drawing of a maximal outer $k$-planar graph
  admits a triangulation of the outer cycle with
  edge piercing number at most $k$ 
  and triangle piercing number at most $3k-1$.
\end{lemma}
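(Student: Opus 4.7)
The plan is to revisit the splitting procedure from the proof of \cref{lem:triangulation-strong} and show a per-triangle strengthening: whenever we split an active link~$\lambda=\{u,v\}$ via a split vertex~$w$, at least one of the three sides of the resulting triangle~$uvw$ has piercing number at most~$k-1$. Since \cref{lem:triangulation-strong} already gives each side a piercing bound of~$k$, this immediately yields a triangle piercing number of at most $(k-1)+k+k = 3k-1$, as required. No new construction is needed; we merely tighten the bookkeeping.

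I would organize the argument by the number $\ell$ of edges piercing the active link~$\lambda$. If $\ell \le k-1$ (including $\ell=0$, which covers the ``unpierced'' subcase of the construction), then $\lambda$ itself already contributes at most $k-1$ to the triangle, and the claim holds. So assume $\ell=k$, which is odd since $k$ is odd; this is the only situation that needs extra care.

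For $\ell = k$ odd, I would walk through exactly the three subcases of the construction in \cref{lem:triangulation-strong}. In the subcase where the middle piercing edge $e_j$ has no crossings on the right side of~$\lambda$, both new links $\{u,w_j\}$ and $\{v,w_j\}$ have piercing at most $\lceil(\ell-1)/2\rceil = (k-1)/2 \le k-1$. In the subcase where $\hat e$ is disjoint from $\lambda$, the computation in the original proof already yields piercing at most $\lceil(k-1)/2\rceil+\lfloor(k-1)/2\rfloor = k-1$ on the link~$\{v,w\}$ (resp.~$\{u,w\}$). In the subcase where $\hat e$ shares an endpoint with $\lambda$, the original proof bounds the ``other'' new link by $k-(2j-\ell)$ or $k+(2j-\ell)-2$; since $\ell$ is odd, $2j-\ell = 2\lceil(\ell+1)/2\rceil - \ell = 1$, so both bounds evaluate to exactly $k-1$. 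In each of these subcases one side of the new triangle has piercing number at most~$k-1$.

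The argument is essentially a careful parity check on quantities that already appear in the proof of \cref{lem:triangulation-strong}, so the main obstacle is not conceptual but bookkeeping: I need to make sure that every branch of the case analysis is matched to its counterpart in the original proof and that the parity argument $2j-\ell=1$ is invoked precisely where needed. Once this is verified, adding the bounds across the three sides of every triangle yields the triangle piercing number bound $3k-1$, completing the proof.
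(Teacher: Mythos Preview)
Your proposal is correct and follows essentially the same approach as the paper: both arguments verify that in every branch of the splitting procedure from \cref{lem:triangulation-strong}, at least one side of the new triangle has piercing number at most $k-1$, using the parity of $\ell$ (or of $k$) to handle the shared-endpoint case. Your organization by the value of~$\ell$ (first disposing of $\ell\le k-1$, then treating $\ell=k$) is a slight rearrangement of the paper's case split, and you are a bit more explicit about the subcase where $e_j$ has no crossings on the right, but the substance is identical.
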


\begin{proof}
Following the proof of \cref{lem:triangulation-strong}, in
all cases, there is an edge with piercing number at most
$k-1$. If $\lambda$ has no piercing edges, its piercing
number is $0 \le k-1$. Otherwise, if $\hat e$ and $\{u,v\}$
are disjoint, $\{v,w\}$ is pierced by at most $k-1$ edges.
In the last case, where $\hat e$ and $\{u,v\}$
have a common endpoint, $\{v, w\}$ is pierced by at most
$k-1$ edges if $\ell$ is odd. If $\ell$ is even, it implies
$\ell \neq k$ and therefore the link $\lambda$ is pierced
by $\ell \leq k-1$ edges.
\end{proof}

With a slight modification of the strategy described in the proof of
\cref{lem:triangulation-strong} and a careful analysis, we also obtain
a better bound of the triangle piercing number for the case $k = 2$.

\begin{lemma}\label{lem:triangulation-o-2-p-graphs}
  Every outer $2$-planar drawing of a maximal outer $2$-planar graph
  admits a triangulation of the outer cycle with
  edge piercing number at most $2$ 
  and triangle piercing number at most $4$.
\end{lemma}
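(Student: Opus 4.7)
The plan is to follow the splitting procedure of \cref{lem:triangulation-strong}, under which the edge piercing bound $\le k = 2$ is already guaranteed. The new work is therefore only the triangle piercing bound of $4$, which improves the trivial $3k = 6$ from the edge bound.

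We perform a case analysis on $\ell \in \{0, 1, 2\}$, the number of edges piercing the active link $\lambda = \{u, v\}$. For $\ell = 0$, the rule from \cref{lem:triangulation-strong} makes $\{u, w\}$ coincide with an edge of $G$, and a curve argument shows $\{w, v\}$ is pierced by at most $k$ edges, giving triangle piercing at most $0 + 2 + 2 = 4$. For $\ell = 1$, we go through the subcases of \cref{lem:triangulation-strong} (no right crossings, $\hat e$ disjoint from $\lambda$, $\hat e$ sharing either endpoint of $\lambda$) and verify that in each one the sum of the three side piercings is at most $1 + 2 + 1 = 4$, observing that whenever one of $\{u,w\}, \{v,w\}$ attains the full edge bound $k=2$, the opposite side has piercing at most $1$ by the formulas $\lceil(\ell-1)/2\rceil + \min\{a,b\}$ and $k - (2j - \ell)$ already derived in the proof of \cref{lem:triangulation-strong}.

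The substantive case is $\ell = 2$, so $j = 2$ and $\lambda$ is pierced by $e^{(1)}, e^{(2)}$ with right endpoints $w^{(1)} < w^{(2)}$. If $e^{(2)}$ has no right-side crossings we take $w = w^{(2)}$ and obtain triangle piercing at most $2 + 1 + 0 = 3$. Otherwise let $\hat e$ be the first edge crossing $e^{(2)}$ on the right side of $\lambda$. If $\hat e$ is disjoint from $\lambda$, \cref{lem:triangulation-strong}'s choice of the zero-crossing endpoint of $\hat e$ is used; since $k = 2$ forces $\hat e$ to have at most one other crossing in total, a direct intertwining count (rather than the loose curve bound) shows that each triangle side's piercing is small enough to sum to at most $4$. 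If $\hat e$ shares endpoint $u$ with $\lambda$, we use the other endpoint of $\hat e$ as $w$; then $\{u, w\} = \hat e$ is an edge of $G$ with piercing at most $k = 2$, and the formula $k - (2j - \ell) = 0$ yields $\{v, w\}$ pierced $0$ times, for a total of $2 + 2 + 0 = 4$.

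The remaining subcase, $\hat e$ sharing endpoint $v$ with $\lambda$, is the one requiring the modification. Here \cref{lem:triangulation-strong}'s default (the other endpoint of $\hat e$) can produce a triangle of piercing as large as $6$, because both $\{u, w\}$ and $\{v, w\} = \hat e$ may attain $k = 2$ and $\lambda$ contributes another $2$. Instead, we pick $w = w^{(1)}$: then $\{u, w^{(1)}\} = e^{(1)}$ is an edge of $G$, hence pierced at most $k = 2$ times, and we argue that $\{w^{(1)}, v\}$ is pierced at most $0$ times. The latter is the heart of the analysis and the main obstacle: we must exhaust the $k=2$ crossing budget of $e^{(1)}, e^{(2)}, \hat e$ to rule out additional edges with one endpoint in $(w^{(1)}, v)$ and the other in $(v, w^{(1)})$. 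Carrying this out gives triangle piercing at most $2 + 2 + 0 = 4$ in this final subcase, completing the bound.
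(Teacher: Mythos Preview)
Your handling of the final subcase ($\ell=2$ and $\hat e$ incident to $v$) contains a concrete error. You set $w=w^{(1)}$ and then assert that $\{u,w^{(1)}\}=e^{(1)}$ is an edge of~$G$. But $e^{(1)}$ is a \emph{piercing} edge of~$\lambda$: its right endpoint is $w^{(1)}$, while its left endpoint lies strictly on the left side of $\lambda$, not at~$u$. So the link $\{u,w^{(1)}\}$ is in general not an edge of~$G$, and the ``$\le k$'' bound you quote does not apply. Your second claim in this subcase, that $\{w^{(1)},v\}$ is pierced $0$ times, also fails: since you assume $w^{(1)}<w^{(2)}$, the edge $e^{(2)}$ has its right endpoint in the open arc $(w^{(1)},v)$ and its left endpoint on the left side of $\lambda$, so $e^{(2)}$ pierces $\{w^{(1)},v\}$. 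Finally, because you deviate from the split rule of \cref{lem:triangulation-strong} here, you can no longer invoke that lemma for the edge-piercing bound of the new links; in fact a curve from $v$ to $w^{(1)}$ has to cross $e^{(2)}$ along $\lambda$ and then potentially both crossings of $e^{(1)}$ on the right, so without further structure $\{v,w^{(1)}\}$ could be pierced three times.

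The paper resolves this subcase differently. It keeps $w=w_j$ (so $w=w^{(2)}$ when $\ell=2$) rather than switching to $w^{(1)}$, and it carries an additional inductive invariant through the whole splitting procedure: every edge that pierces the current active link is \emph{anchored}, meaning it already has a crossing with some edge lying to the left of (or equal to) the link, and hence has at most one crossing on the right. This invariant is verified case by case for all five split rules and is exactly what lets one conclude, in the $v$-case, that $e_j$ has no crossing to the right of $x$, so the curve from $v$ along $\hat e$ to $x$ and then along $e_j$ to $w_j$ meets no edge at all. Your outline has no analogue of this invariant, and the ad hoc ``exhaust the crossing budget'' sketch does not supply one.
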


  \begin{figure}
    \centering
    \includegraphics[page=3]{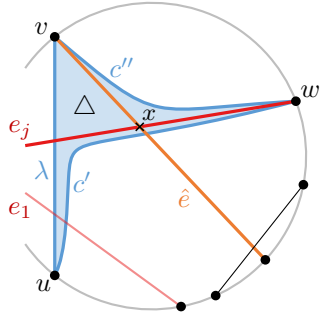}
    \caption{$\lambda$ and $\hat e$ share the endpoint $v$;
            the new triangle~$\triangle$ is bounded by~$\lambda$,
            $c'$, and~$c''$.}
    \label{fig:o2p-tpn4}
  \end{figure}

\begin{proof}
  We roughly follow the strategy of the proof of
  \cref{lem:triangulation-strong}.  Recall that $\lambda$ is the
  active link and that $\ell$ is the number of edges that
  pierce~$\lambda$.  If $\ell>0$, then~$e_j$ with
  $j=\lceil(\ell+1)/2\rceil$ is the ``middle'' edge
  among those that pierce~$\lambda$.  If
  $e_j$ has crossings on the right side of $\lambda$, then, among
  these, $x$ is the crossing point of $e_j$ closest to~$\lambda$.  The
  edge $\hat{e}$ is the edge that crosses $e_j$ in $x$.  As in the
  proof of \cref{lem:triangulation-strong}, we distinguish the
  following cases.
  \begin{enumerate}
  \item \label{enum:no} $\lambda$ has no piercing edges.
  \item \label{enum:right} $e_j$ has no crossings to the right of
    $\lambda$.
  \item \label{enum:disjoint} $\lambda$ and $\hat{e}$ are disjoint
    (\cref{fig:okp-1.5k-proof-1}).
  \item \label{enum:u} $\lambda$ and $\hat{e}$ meet in $u$
    (\cref{fig:okp-1.5k-proof-2}).    
  \item \label{enum:v} $\lambda$ and $\hat{e}$ meet in $v$.
  \end{enumerate}
  In cases \ref{enum:no}--\ref{enum:u}, we choose the split vertex~$w$
  as in the proof of \cref{lem:triangulation-strong}, which means that
  the new active links are pierced at most $k=2$ times.

  In case~\ref{enum:v}, we now choose $w$ to be $w_j$ (see
  \cref{fig:o2p-tpn4}).  Let $c'$ be a curve that starts in~$u$,
  follows $\lambda$ until just before~$e_j$ and then follows~$e_j$
  to~$w$.  Let $c''$ be a curve that starts in~$v$,
  follows $\hat e$ to~$x$,
  and then follows~$e_j$ to~$w$.  In order to show that each of the
  two curves is crossed by at most two edges, we need to do some
  preparations.
  We say that an edge lies to the \emph{left} of the active link if
  both of its endpoints lie on the left side or one endpoint lies on
  the left and the other is an endpoint of the active link.
  We further say that an edge~$e$ piercing a link $\lambda$ is
  \emph{anchored} with respect to~$\lambda$
  if there exists another edge $f$ that crosses~$e$ and either
  $f=\lambda$ or $f$ lies to the left of~$\lambda$.
  Observe that an anchored edge can have at most one crossing to the
  right of~$\lambda$.

  We claim that, during the splitting procedure described above, every
  edge piercing an active link is anchored (with respect to the active
  link).  The claim is true in the
  beginning as $\lambda=\{v_1,v_n\}$ is not pierced.  In
  case~\ref{enum:no}, every edge that pierces $\{u,w\}$ or $\{v,w\}$
  also crosses the edge $\{u,w\}$, so it is anchored.  In
  case~\ref{enum:right}, every edge that pierces $\{u,w\}$ or
  $\{v,w\}$ also pierces $\lambda=\{u,v\}$
  and is anchored by induction.  In
  case~\ref{enum:disjoint}, observe that no edge crosses $e_j$ between
  $\lambda$ and $x$ (as $x$ is the first crossing on $e_j$ to the
  right of $\lambda$), and no edge crosses $\hat e$ between $x$ and
  $w$ (by the choice of $w$ and since $\lfloor (k-1)/2 \rfloor = 0$
  for $k=2$).  Therefore, every edge that pierces $\{u,w\}$ or
  $\{v,w\}$ also pierces $\lambda$ and is anchored by induction.
  In case~\ref{enum:u}, $\{u,w\}$ is an edge, so every 
  edge piercing it is anchored.  Every edge that pierces $\{v,w\}$
  either crosses $\{u,w\}$ or pierces~$\lambda$ and is anchored by
  induction.  Similarly, in case~\ref{enum:v},
  every edge that pierces $\{u,w\}$ or $\{v,w\}$ either
  pierces~$\lambda$ or crosses~$e_j$ (which lies to the left of both
  new active links).

  We now return to case~\ref{enum:v} and bound the piercing numbers of
  the new active links $\{u,w\}$ and $\{v,w\}$.  To do this, we count
  the edges that cross the curves~$c'$ and $c''$, respectively.
  As~$e_j$ is anchored with respect to $\lambda$, we get that $x$ is
  the only crossing on $e_j$ to the right of $\lambda$.  Now, it is
  easy to see that $c'$ is pierced only by $\hat e$ and possibly by
  $e_1$ in case $j=2$.  For $c''$, we can show that there are no
  crossings along the curve.  First, observe that any edge that
  crosses~$\hat e$ between $v$ and $x$ has to pierce~$\lambda$,
  which is not possible.  Second, no edge crosses $e_j$ between $x$
  and $w$, as $e_j$ is anchored and $x$ is the only crossing on $e_j$
  to the right of~$\lambda$.  Combining these two observations, we see
  that there is no place for $c''$ to cross with an edge.

  It remains to show that the triangle piercing number of the new
  triangle $\triangle$ formed by $u$, $v$, and $w$ is at most~$4$.
  In case~\ref{enum:no}, $\lambda$ has no piercing edges.
  In cases~\ref{enum:right}, \ref{enum:u}, and~\ref{enum:v}, $\{v,w\}$
  has no piercing edges.
  In case~\ref{enum:disjoint}, we distinguish two subcases depending on
  the relative position of $w$.
  When $w_j < w$, then $\{v,w\}$ has no piercing edges.
  When $w < w_j$, then each of $\{u,w\}$ and $\{v,w\}$ is pierced
  at most once.
\end{proof}
  
Lastly, we combine the two triangulation strategies from the proofs of
\cref{lem:triangulation-weak,lem:triangulation-strong} to obtain a
triangulation method for outer min-$k$-planar graphs.

\begin{lemma}\label{lem:triangulation-min}
  For every $k \ge 1$,
  every outer min-$k$-planar drawing of a maximal outer min-$k$-planar graph admits a triangulation of the outer cycle with 
  edge piercing number at most $2k-1$ 
  and triangle piercing number at most $6k-3$.
\end{lemma}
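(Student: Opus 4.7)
The plan is to combine the splitting strategies of \cref{lem:triangulation-weak,lem:triangulation-strong} and exploit the defining property of min-$k$-planarity: when a piercing edge of the active link is \emph{heavy} (has more than $k$ crossings), every edge crossing it must be \emph{light} (has at most $k$ crossings). I would keep the inductive invariant from \cref{lem:triangulation-weak}, namely that every active link has piercing number at most $2k-1$; the base case $\{v_1,v_n\}$ and the case in which the active link has no piercing edges are then handled exactly as there.

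For the inductive step, let the active link $\lambda=\{u,v\}$ be pierced by $e_1,\dots,e_\ell$ with $\ell\le 2k-1$. After applying \cref{lem:push-the-crossings} (its proof relies only on the cyclic order and so carries over to min-$k$-planar drawings), I would single out the middle edge $e_j$ with $j=\lceil(\ell+1)/2\rceil$. If $e_j$ is light, I would follow \cref{lem:triangulation-weak} verbatim with $w=w_j$, so that the bounding curves cross at most $\lceil(\ell-1)/2\rceil+k\le 2k-1$ edges each, using that $e_j$ is pierced at most $k$ times. If $e_j$ is heavy, then every edge crossing $e_j$ is light, and I would switch to the refinement of \cref{lem:triangulation-strong}: if $e_j$ has no crossings on the right of $\lambda$, I would still pick $w=w_j$ and the bounding curves collect only the $\lceil(\ell-1)/2\rceil\le k-1$ crossings on the $\lambda$-part; otherwise, I would locate the first crossing $x$ of $e_j$ on the right of $\lambda$, take $w$ to be the endpoint of the (necessarily light) edge $\hat e$ crossing $e_j$ at $x$ on the side with fewer crossings along $\hat e$, and argue as in \cref{lem:triangulation-strong} that each new link is pierced by at most $\lceil(\ell-1)/2\rceil+\lfloor(k-1)/2\rfloor+1\le(3k-1)/2\le 2k-1$ edges, where the $+1$ accounts for the single crossing with $e_j$ contributed by the curve from $u$. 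The triangle piercing bound $6k-3$ is then the trivial sum $3\cdot(2k-1)$ over the three sides of any triangle of $\T$.

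The main obstacle I expect is the degenerate sub-case where $\hat e$ shares an endpoint with $\lambda$ (as in \cref{fig:okp-1.5k-proof-2}), where one side of the new triangle coincides with $\hat e$. The case split of \cref{lem:triangulation-strong} still applies, but I need to redo the counting using only the light-edge bound on $\hat e$ (which is justified precisely because $e_j$ is heavy) together with the weaker $\ell\le 2k-1$ in place of $\ell\le k$. A direct recomputation following that proof yields $k+2j-\ell-2\le k$ and $k-(2j-\ell)\le k-1$ for the two new links, both well below the $2k-1$ threshold, so the invariant survives.
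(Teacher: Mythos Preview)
Your treatment of the case where the active link is pierced is essentially identical to the paper's, and your degenerate-case recomputation is fine. There is, however, a genuine gap in the case you dismiss in the first paragraph: when the active link $\lambda=\{u,v\}$ has \emph{no} piercing edges, you say this is ``handled exactly as there'', meaning exactly as in \cref{lem:triangulation-weak}. But that argument chooses the split vertex $w$ as the neighbour of $u$ closest to $v$ and uses that the edge $\{u,w\}$ of $G$ has at most $k$ crossings to bound the piercing number of the new link $\{v,w\}$ via the curve~$c''$. In an outer min-$k$-planar drawing the edge $\{u,w\}$ may be heavy, i.e., have arbitrarily many crossings, so neither the link $\{u,w\}$ nor the curve $c''$ running alongside it is controlled, and the inductive invariant fails.

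The paper patches this in the same spirit as your heavy-$e_j$ case: if $\{u,w\}$ is heavy, take the crossing $x$ on $\{u,w\}$ closest to $u$, let $\hat e$ be the (necessarily light) edge crossing it there, and choose the split vertex to be a right endpoint of $\hat e$; the two bounding curves then run along segments of $\hat e$, $\{u,w\}$ up to $x$, and $\lambda$, picking up at most $k$ crossings total (only edges that cross $\hat e$). You need to add this sub-case to complete the argument.
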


\begin{proof}
    We proceed as in the proof of \cref{lem:triangulation-weak}.
    We describe a splitting procedure such that every active link has
    piercing number at most $2k-1$.  Consult \cref{fig:okp-proof} for
    a reminder of the notation and the strategy.
    As the given drawing is outer min-$k$-planar, edges may now have
    more than $k$ crossings.
    We call such edges \emph{heavy} and the other edges \emph{light}.
    By definition, heavy edges cross only light edges.

    We first explain why the previous proof breaks in both cases if
    the curves $c'$ and $c''$ are routed along heavy edges.  Recall
    that in the first case when $\lambda$ has no piercing edges, we
    selected the split vertex $w$ to be the neighbor of $u$ from the
    right side of $\lambda$ that is closest to~$v$.  If, however, the
    edge $\{u,w\}$ is heavy, then we cannot bound the number of
    crossings along the curve~$c''$.  In the second case, we selected
    the split vertex to be the right endpoint~$w_j$ of the middle
    piercing edge~$e_j$.  If the edge~$e_j$ is heavy, then we cannot
    bound the numbers of crossings along~$c'$ and~$c''$.

    To resolve the issue with the second case, we apply the strategy
    from the proof of \cref{lem:triangulation-strong}.
    Observe that if $e_j$ is heavy, then for the first crossing $x$
    with an edge~$\hat e$ on the right side of~$\lambda$, the
    edge~$\hat e$ is a light.
    Similarly, as in the proof of \cref{lem:triangulation-strong} (see
    \cref{fig:okp-1.5k-proof}), we select the split vertex~$w$ to be
    the endpoint of~$\hat e$ that is different from~$u$ and~$v$ and in
    case both endpoints are, the one on the side of~$x$ with fewer
    crossings.
    The piercing numbers of the two new links are at most
    $\lceil(2k-1)/2\rceil+k-1 \le 2k-1$.

    To resolve the issue with the first case, we use a similar
    strategy.  If the edge $\{u,w\}$ is heavy, let $x$ be the crossing
    on $\{u,w\}$ that is closest to~$u$ and let $\hat e$ be the
    crossing edge.  Note that $\hat e$ is light.  We select the split
    vertex $w$ to be any right endpoint of $\hat e$.  Now consider the
    two curves that start at~$w$, go along~$\hat e$ to~$x$, then along
    $\{u,w\}$ to~$u$, and optionally to $v$ along~$\lambda$.  Both
    curves have at most $k$ crossings, only with edges that also
    cross~$\hat e$.
\end{proof}

\begin{remark}
  Given the intersection graph of the edges of~$G$, the triangulations
  in Lemmas~\ref{lem:triangulation-weak}--\ref{lem:triangulation-min}
  can be constructed in $O(nk)$ time.
  Each step of the splitting procedures presented in
  Lemmas~\ref{lem:triangulation-weak}--\ref{lem:triangulation-min} can
  be implemented to run in $O(k)$ time, as the edges that pierce the
  new active links also pierce at least one of the edges~$\lambda$,
  $e_j$, and $\hat e$.
\end{remark}

\section{Applications}
\label{sec:applications}

In this section, we show several applications of our triangulations 
from Section~\ref{sec:triangulations}. Among others, 
we improve the bound of Wood and Telle~\cite{wt-pdcng-NYJM07} on 
the treewidth of outer $k$-planar graphs from
$3k+11$ to $1.5k + 2$ (see~\cref{sec:treewidth}) and
improve the bound of Chaplick et al.~\cite{BeyondOuterplanarity} 
on the separation number of outer $k$-planar graphs from $2k+3$ 
to $k + 2$; see \cref{sec:separation}. 
In addition,  we improve 
the bound of Wood and Telle of $3k+11$ on the treewidth of outer min-$k$-planar graphs to $3k+1$; see \cref{thm:min-okp-tw3k}.

\subsection{Treewidth}
\label{sec:treewidth}

Now we show our main tool to obtain a better upper bound
on the treewidth. Note that we do not refer specifically
to outer $k$-planar graphs since our tool can be
used for any drawing of a graph whose vertices lie on the outer
face. 

Consider a graph $G$ with a drawing $\Gamma$ whose outer face is
bounded by a simple cycle~$C$ that contains all vertices of~$G$.
Let~$\T$ be a planar triangulation of~$C$ and $f$ be some triangular face of $\T$. 
We call an edge of $E(G)\setminus E(\T)$ \emph{short} with respect to
$f$ if one of its endpoints is on $f$
and it pierces a link of $f$. We call an edge of $E(G)\setminus E(\T)$ \emph{long} with respect to $f$ 
if neither of its endpoints is on $f$ and it pierces two links
of $f$.

\begin{lemma}\label{lem:triangulation-treewidth}
  If $G$ is a graph with a convex drawing whose outer cycle admits
  a triangulation~\T with triangle piercing number at
  most $c$, then $\tw(G) \le (c + 5) / 2$.
\end{lemma}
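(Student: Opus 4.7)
My plan is to build a tree decomposition of $G$ whose tree structure is the weak dual $T^*$ of the triangulation $\T$ (nodes are triangles of $\T$; two nodes are adjacent iff the triangles share a link). For each triangle $f = \triangle uvw$, the bag $B_f$ will contain $\{u,v,w\}$ together with a carefully chosen set of endpoints of edges in $E(G) \setminus E(\T)$. The correctness of the decomposition follows directly from the construction, while the width bound comes from a combinatorial balancing argument exploiting the triangle piercing bound~$c$.

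For each edge $e = \{x,y\} \in E(G) \setminus E(\T)$, the straight-line drawing of~$e$ traces a well-defined path $P_e = (f_1^e, \ldots, f_{t_e}^e)$ of $t_e \ge 2$ triangles in $T^*$, starting at a triangle containing~$x$ and ending at one containing~$y$. I will choose a \emph{meeting triangle} $m_e \in \{f_1^e, f_{t_e}^e\}$ and then, if $m_e = f_1^e$, add $y$ to the bags $B_{f_j^e}$ for $j = 1, \ldots, t_e - 1$; otherwise, add $x$ to the bags $B_{f_j^e}$ for $j = 2, \ldots, t_e$. In either case $B_{m_e}$ contains both endpoints of~$e$, so $e$ is covered. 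Moreover, whenever a vertex $v$ is added to bags along some such $P_e$, those bags form a sub-path of $T^*$ attached to a triangle of $v$'s fan, so the set of bags containing $v$ remains a connected subtree.

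To bound the bag size, I would observe that for each triangle $f$, every long edge with respect to $f$ contributes exactly one extra vertex to $B_f$ (an endpoint of the edge, propagated through $f$), while every short edge $e$ with respect to $f$ contributes one extra to $B_f$ if and only if $m_e = f$. Denoting by $b_f$ the number of short edges with $m_e = f$, this yields $|B_f| \le 3 + l_f + b_f$. Together with $s_f + 2 l_f \le c$, a short arithmetic calculation shows that enforcing $b_f \le \lceil s_f/2 \rceil$ for every $f$ gives $l_f + b_f \le (c+1)/2$, hence $|B_f| \le (c+7)/2$ and $\tw(G) \le (c+5)/2$.

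The main technical step, and in my view the main obstacle, is achieving $b_f \le \lceil s_f/2 \rceil$ simultaneously for all triangles. I would model the choice of the $m_e$'s as an orientation of the auxiliary multigraph $H$ whose vertices are the triangles of $\T$ and which has one edge connecting $f_1^e$ and $f_{t_e}^e$ for each $e \in E(G) \setminus E(\T)$; then $\deg_H(f) = s_f$. Using the classical orientation theorem (every multigraph admits an orientation in which each vertex has in-degree at most $\lceil \deg/2 \rceil$, obtained via an Eulerian decomposition after pairing odd-degree vertices with a virtual matching), I would declare the head of each oriented edge to be the corresponding meeting triangle~$m_e$. This yields the desired bound $b_f \le \lceil s_f/2 \rceil$ and hence $\tw(G) \le (c+5)/2$.
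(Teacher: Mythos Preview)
Your argument is correct and takes a genuinely different route from the paper. The paper also builds its decomposition on the weak dual of~$\T$, but it \emph{roots} the tree at a leaf and inserts auxiliary ``copy bags'' above every original bag; endpoints of piercing edges are then \emph{lifted} bottom-up through a case analysis (short edges, lineal long edges, bent long edges), with the secondary copy bags serving as dedicated meeting places for bent edges. The width bound $(c+5)/2$ is obtained by a local counting argument at each copy bag, crucially using that at least one link of every triangle is pierced (else one splits along that link).

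Your proof avoids both the rooting and the copy bags: you keep the bare weak dual and resolve the only freedom---which endpoint of each non-link edge to propagate---globally, via the Eulerian orientation of the auxiliary multigraph~$H$. The identity $\deg_H(f)=s_f$ is exactly right (each $e\in E(G)\setminus E(\T)$ is short with respect to precisely its two terminal triangles $f_1^e,f_{t_e}^e$), and the classical $\lceil\deg/2\rceil$ orientation then gives $b_f\le\lceil s_f/2\rceil$. Together with $s_f+2l_f\le c$ this yields $|B_f|\le 3+l_f+b_f\le 3+(c+1)/2$. The connectivity of the bags containing a vertex~$v$ follows since every path $P_e$ along which $v$ is propagated leaves the fan of~$v$ across the link opposite~$v$ and, $T^*$ being a tree, cannot re-enter the fan.

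What each approach buys: the paper's construction is entirely local and directly algorithmic (one pass up the rooted tree), at the cost of extra bags and case analysis. Your construction is shorter and more conceptual, produces a decomposition on the minimal tree~$T^*$, and isolates the balancing step as a single clean orientation lemma; the price is a small global preprocessing step (the Eulerian orientation) rather than a purely local rule.
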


\begin{proof}
  Let $T$ be a tree
  obtained by taking the weak dual graph of~\T and associating each node of $T$ with a bag that contains the vertices of $G$ that are
  incident to the corresponding face.  We modify this tree to consider %
  the piercing edges, ensuring that the sizes of bags do not exceed $(c+7)/2$. 
  (Recall that the
  width of a tree decomposition is the size of the largest bag
  minus~1.)

  If \T has an unpierced inner link~$\lambda=\{u,v\}$, we split~\T along~$\lambda$
  into two triangulations and compute tree decompositions for them.
  Since each of them has a bag that contains $\lambda$, we can
  connect them and obtain a tree decomposition of $G$. 

  Hence, from now on, we assume that every inner link of~\T is pierced at least once.
  We root~$T$ at an arbitrary leaf.  The basic idea is as follows.  Consider an edge~$\{u,v\}$
  that pierces inner links of~\T.  Then we \emph{lift} both~$u$
  and~$v$ to the bag~$b$ that is the lowest common ancestor of the highest bags~$b_u$
  and~$b_v$ that contain $u$ and $v$, respectively. That is, we place a copy of~$u$ into each bag on the way
  from~$b_u$ to~$b$ and symmetrically for~$v$.
  If we apply this strategy naively, then there may be a bag with
  $c + 3$ vertices in the worst case.  
  In the following, we describe how to improve this bound. 
  
  Consider a face $f$ of the triangulation $\T$. Let $b_f$ be the bag
  in~$T$ that corresponds to~$f$.  Let 
  $v_1$, $v_2$, and $v_3$ be the vertices of~$f$ such that $f$ shares the link $\{v_1,v_2\}$ 
  with the face that corresponds to the parent of~$f$ in~$T$ (if any). 
  An edge of $E(G)\setminus E(\T)$ is \emph{lineal} 
  if one of its endpoints is in
  the bag $b_f$ or above and the other endpoint is in 
  a bag below~$b_f$ in the tree~$T$.
  A long edge of $E(G)\setminus E(\T)$ is \emph{bent} if its endpoints are in 
  different subtrees of~$T$. 
  
  We perform the following two modifications of~$T$ for each face~$f$; see \cref{fig:step-1-2}.

  \begin{itemize}
  \item We subdivide each edge of~$T$ between $b_f$ and its children (if any) by a new
    \emph{copy bag} that is a copy of~$b_f$.  We call these bags
    \emph{primary} copy bags.
  \item\label{item:tree-decomp-step2} 
    If $f$ has two children in~$T$, assume that piercing number of $\lambda=\{v_1,v_3\}$ 
    is less or equal to that of~$\lambda'=\{v_2,v_3\}$ and that $\lambda$ is the
    common link of~$f$ and its left child face.  If there are $l>0$
    bent edges $\{u_1,w_1\},\dots,\{u_l,w_l\}$ w.r.t. $f$, then we subdivide the edge of~$T$ between~$b_f$
    and its left child by additional $l$ copies
    of~$b_f$.  These \emph{secondary} copy bags are numbered $b_{f,1}, \dots, b_{f,l}$ from top
    to bottom; see \cref{fig:step-1-2}.
  \end{itemize}

  \begin{figure}[tb]
    \centering
    \includegraphics{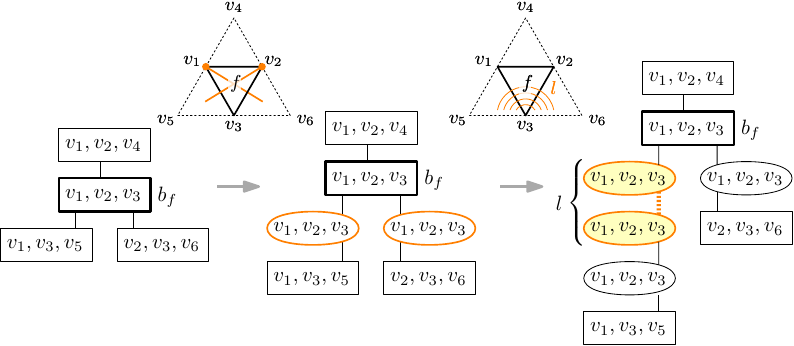}
    \caption{Steps 1 and 2 of the modification.  Copy bags are oval;
      secondary copy bags are yellow.}
    \label{fig:step-1-2}
  \end{figure}

  Next, we lift vertices in bottom-up order, e.g., in preorder.  More
  precisely, at each \emph{original} bag $b_f$ (that is, not a copy
  bag) in our preorder traversal
  of~$T$, we process each edge~$e$ of~$G$ that pierces~$f$, according
  to its type as follows.

  \begin{itemize}
  \item If~$e$ is a short edge, that is, it is incident to a vertex
    of~$f$, namely to $v_1$ or $v_2$, then we lift the highest
    occurrence of the other endpoint~$u$ of~$e$ by one bag (as $v_5$
    and $v_7$ in \cref{fig:td-add-crossing-edge-2})
    to the primary copy bag of~$b_f$. %
  \item If $e$ is a long edge that pierces the link~$\{v_1,v_2\}$, then we
    lift the highest occurrence of the other endpoint~$u$ of~$e$ all
    the way to~$b_f$; see \cref{fig:td-add-crossing-edge-3}.
  \item If $e=\{u_i,w_i\}$ for some $i \in \{1,\dots,l\}$, then we
    lift both endpoints of~$e$.  First, we lift the highest occurrence
    of~$u_i$ from the left subtree to~$b_{f,i}$.  Second, we lift the
    highest occurrence of~$w_i$ in the right subtree of~$b_f$
    to~$b_f$.  Third, we copy~$w_i$ into each bag from~$b_f$ down
    to~$b_{f,i}$; see \cref{fig:td-add-crossing-edge-5}.  In this way,
    the two endpoints of~$e$ meet in~$b_{f,i}$.
  \end{itemize}
  
  \begin{figure}[tb]
    \begin{subfigure}[t]{.273\textwidth}
      \centering
      \includegraphics[page=1]{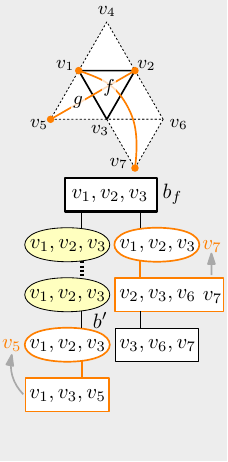}
    \end{subfigure}
    \begin{subfigure}[t]{.273\textwidth}
      \centering
      \includegraphics[page=2]{lifting-up}
    \end{subfigure}
      \begin{subfigure}[t]{.42\textwidth}
      \centering
      \includegraphics{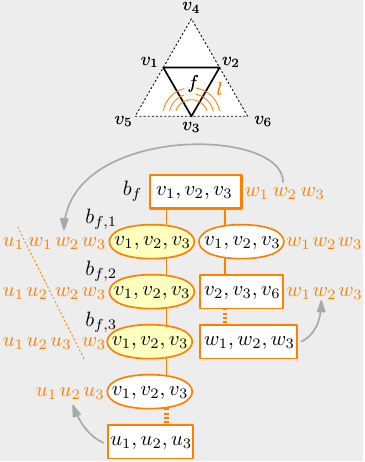}
    \end{subfigure}

    \vspace*{-2.5ex}
    
    \begin{subfigure}[t]{.275\textwidth}
      \subcaption{}
      \label{fig:td-add-crossing-edge-2}
    \end{subfigure}
    \begin{subfigure}[t]{.275\textwidth}
      \subcaption{}
      \label{fig:td-add-crossing-edge-3}
    \end{subfigure}
    \begin{subfigure}[t]{.42\textwidth}
      \subcaption{}
      \label{fig:td-add-crossing-edge-5}
    \end{subfigure}

    \caption{Lifting up endpoints of the edges that pierce links of
      the face~$f$.}
    \label{fig:lifting-up}
  \end{figure}
  
  The modifications described above make sure that the resulting tree $T'$ 
  is indeed %
  a tree decomposition of~$G$, i.e.,
  for each vertex~$v$ of~$G$, the subtree induced by the bags
  that contain~$v$ is connected, and for each edge~$e$ of~$G$, there
  is a bag in~$T'$ that contains both endpoints of~$e$.

In the following, we show that each bag of $T'$ contains 
at most $(c+ 7)/ 2$  vertices, i.e., we have added 
at most $(c+1) / 2$ to the initial three vertices.
For each face $f$, the number of vertices added to the
original bag~$b_f$ is the number of long edges w.r.t.~$f$, which is at
most $c/2$.  It remains to bound the number of vertices that are added
to each copy bag of~$b_f$.

First, we show the bound for secondary copy bags.  Recall that, for
$i \in \{1, \dots, l\}$, we have added to~$b_{f,i}$ from below $i$
endpoints of the bent long edges w.r.t.~$f$ and from above $l+1-i$
endpoints of bent long edges; see~\cref{fig:td-add-crossing-edge-5}.
In addition, we have added a vertex for each lineal long edge.
Therefore, the number of added vertices is at most one plus the number
of long edges w.r.t.~$f$ that pierce the link $\{v_1,v_3\}$.  Recall
that the link $\{v_1,v_3\}$ was chosen to have at most as many
piercing edges as $\{v_2,v_3\}$.  Because each inner link of~\T, including
$\{v_1,v_2\}$, is pierced at least once, $\{v_1,v_3\}$ is pierced at most
$(c-1)/2$ times.  Thus, we have added at most $(c+1)/2$ vertices
to $b_{f,i}$, and so $|b_{f,i}| \le (c+7)/2$.

Now, we bound the number of vertices that are added to the primary (the
bottommost) copy bag $b'$ of $b_f$. Note that such a bag is a parent
bag of some original bag $b_g$ such that the faces $f$ and $g$ are
adjacent.  There are two types of vertices that are added to~$b'$.
The first type consists of the endpoints of the lineal long edges of $b_g$ (at most $c/2$).
The second one is just one vertex $w$ of the face $g$ that is not shared with $f$ (for example, the endpoint $v_5$ of $g$ in~\cref{fig:td-add-crossing-edge-2}). 
We have added~$w$ only if there are edges in $E(G)\setminus E(\T)$ with lower endpoint $w$ (lower w.r.t.~$T$).
Each such edge contributes to the piercing number of~$g$. Hence, the number of lineal long edges w.r.t.~$g$ is at most $(c-1)/2$.
In both cases ($w \in b'$ and $w \not\in b'$), we have added at most $(c+1)/2$ vertices to~$b'$, so $|b'| \le (c+7)/2$.
\end{proof}

\cref{lem:triangulation-o-2-p-graphs,lem:triangulation-treewidth} give us an upper bound $4$ for outer $2$-planar graphs,
extending the known tight bound $k + 2$ for $k = \{0,1\}$
\cite{auer2016outer,abbghnr-co1pg-Algorithmica21}.
Note that $K_5$ is outer $2$-planar and $\tw(K_5) = 4$.

\begin{theorem}
  \label{thm:o2p-tw4}
  Every outer 2-planar graph has treewidth at most $4$,
  which is tight.
\end{theorem}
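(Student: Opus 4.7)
The plan is to combine \cref{lem:triangulation-o-2-p-graphs} with \cref{lem:triangulation-treewidth} and then round down using the integrality of treewidth. Since treewidth is monotone under edge addition in the sense that we may freely add edges while staying outer $2$-planar, it suffices to prove the bound for a \emph{maximal} outer $2$-planar graph $G$; for such a $G$ with at least three vertices, \cref{obs:outer-cycle} guarantees that the outer face is bounded by a simple cycle, so \cref{lem:triangulation-o-2-p-graphs} applies. (Graphs on at most four vertices trivially have treewidth at most~$3$.)

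First, I would fix an outer $2$-planar drawing of $G$ and invoke \cref{lem:triangulation-o-2-p-graphs} to obtain a triangulation $\T$ of the outer cycle with triangle piercing number at most~$4$. Then I would feed $\T$ into \cref{lem:triangulation-treewidth} with $c = 4$; the lemma gives a tree decomposition of $G$ of width at most $(c+5)/2 = 9/2$. Because treewidth is an integer, this yields $\tw(G) \le 4$, establishing the upper bound.

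For tightness, I would exhibit the complete graph $K_5$ as an outer $2$-planar graph of treewidth~$4$. The value $\tw(K_5) = 4$ is standard (any clique $K_n$ has treewidth $n-1$). To see that $K_5$ is outer $2$-planar, place its five vertices in convex position and draw all $\binom{5}{2} = 10$ edges as straight-line segments; each of the five ``diagonal'' edges is crossed exactly by the two diagonals that skip neither of its endpoints, so every edge is crossed at most twice. Combining this with the upper bound completes the proof of the theorem.

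The main technical content already lives in \cref{lem:triangulation-o-2-p-graphs,lem:triangulation-treewidth}, so the remaining obstacle is only the small but crucial arithmetic observation that $(4+5)/2$ is not an integer, which is what lets us match the lower bound of~$4$ rather than getting only~$5$. Without this integrality step, the triangle piercing number of~$4$ (rather than $3$) would not suffice, which is exactly why the improved analysis of $K_4$-like triangles in \cref{lem:triangulation-o-2-p-graphs} (the tighter bound of~$4$ on triangle piercing number, rather than the three-times-edge-piercing bound of~$6$) is essential.
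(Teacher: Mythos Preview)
Your proposal is correct and follows exactly the approach the paper intends: combine \cref{lem:triangulation-o-2-p-graphs} (triangle piercing number at most~$4$) with \cref{lem:triangulation-treewidth} to get $\tw(G)\le 9/2$, round down by integrality, and invoke $K_5$ for tightness. The paper states this route in the sentence preceding the theorem rather than spelling it out, so your added details (reducing to maximal graphs via monotonicity, handling graphs on few vertices, and verifying that $K_5$ is outer $2$-planar) are all appropriate fleshing-out of the same argument.
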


\begin{theorem}
	\label{thm:okp-tw1.5k}
	Every outer $k$-planar graph has treewidth at most $1.5k + 2$.
\end{theorem}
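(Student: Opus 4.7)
The plan is to combine the triangulation result of \cref{lem:triangulation-strong} (and its refinement \cref{lem:triangulation-strong-odd} for odd $k$) with the generic tree-decomposition construction of \cref{lem:triangulation-treewidth}. Since neither treewidth nor drawings get worse under edge deletion, and since treewidth is monotone under subgraphs, we may assume that $G$ is a maximal outer $k$-planar graph and fix any outer $k$-planar drawing~$\Gamma$. By \cref{obs:outer-cycle} the outer face of $\Gamma$ is bounded by a simple cycle, so the triangulation lemmas apply.

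The strategy splits according to the parity of $k$. For odd $k \ge 1$, \cref{lem:triangulation-strong-odd} hands us a triangulation $\T$ of the outer cycle whose triangle piercing number is at most $3k-1$. Plugging $c = 3k-1$ into \cref{lem:triangulation-treewidth} gives
\[
\tw(G) \;\le\; \frac{(3k-1)+5}{2} \;=\; \frac{3k+4}{2} \;=\; 1.5k + 2.
\]
For even $k \ge 2$, we fall back on \cref{lem:triangulation-strong}, which produces a triangulation with edge piercing number at most $k$, hence triangle piercing number at most $3k$. Applying \cref{lem:triangulation-treewidth} with $c = 3k$ yields $\tw(G) \le (3k+5)/2 = 1.5k + 2.5$. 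Since $k$ is even, $1.5k$ is an integer, so $1.5k + 2.5$ is a half-integer and the integrality of treewidth forces $\tw(G) \le 1.5k + 2$. The case $k = 0$ is the classical outerplanar bound $\tw(G) \le 2 = 1.5 \cdot 0 + 2$.

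There is essentially no obstacle here: both triangulation lemmas and the treewidth lemma have already been proved, and the argument is just a parity-sensitive substitution. The only mild subtlety worth pointing out is the even case, where a naive application of \cref{lem:triangulation-strong} together with the crude inequality ``triangle piercing number $\le 3 \cdot$ edge piercing number'' overshoots the target by $1/2$, and one must invoke the integrality of $\tw$ (and the fact that $1.5k \in \mathbb{Z}$ for even $k$) to recover the claimed bound. The reduction to the maximal case at the outset is what lets us quote \cref{obs:outer-cycle} to guarantee that the outer face is a simple cycle, which is a prerequisite of our triangulation framework.
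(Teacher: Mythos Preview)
Your proof is correct and follows essentially the same route as the paper: combine \cref{lem:triangulation-strong} (resp.\ \cref{lem:triangulation-strong-odd} for odd~$k$) with \cref{lem:triangulation-treewidth}, and use integrality of treewidth to shave the extra $0.5$ in the even case. The only cosmetic difference is that the paper dispatches $k\le 2$ by citing the already-established tighter bounds, whereas you fold $k=1$ and $k=2$ into the general parity argument (which also works).
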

\begin{proof}
    We already know better bounds for $k \le 2$.
    For $k \ge 3$, by \cref{lem:triangulation-strong,lem:triangulation-treewidth,lem:triangulation-strong-odd}, we obtain a bound $1.5k + 2.5$ for even $k$ and $1.5k + 2$ for odd $k$.
    As treewidth is an integer, we obtain $1.5k + 2$ for general $k$.
\end{proof}

By \cref{lem:triangulation-min,lem:triangulation-treewidth}, we also obtain the following bound for outer min-$k$-planar graphs, which improves the previously known bound $3k+11$ by Wood and Telle
\cite{wt-pdcng-NYJM07}.
Note that outer min-$0$-planar graphs are outerplanar graphs.
\begin{theorem}
  \label{thm:min-okp-tw3k}
  For $k \ge 1$, every outer min-$k$-planar graph has treewidth at
  most $3k+1$.  %
\end{theorem}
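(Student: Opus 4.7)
The plan is to apply the two preceding lemmas in sequence, following exactly the pattern used to prove \cref{thm:okp-tw1.5k}. Since treewidth is monotone under subgraphs, it suffices to prove the bound when $G$ is a maximal outer min-$k$-planar graph; any outer min-$k$-planar graph is a subgraph of such a maximal one, and a convex drawing witnessing outer min-$k$-planarity can be extended to a drawing of the maximal supergraph by \cref{obs:outer-cycle}.

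First, I would fix a maximal outer min-$k$-planar drawing $\Gamma$ of $G$ and invoke \cref{lem:triangulation-min} to obtain a triangulation $\T$ of the outer cycle of $G$ whose triangle piercing number is at most $6k-3$. Then I would feed $\T$ into \cref{lem:triangulation-treewidth} with the parameter $c = 6k-3$. That lemma is stated for any convex drawing whose outer face is bounded by a simple cycle containing all vertices, which is precisely the situation we are in by \cref{obs:outer-cycle}. The resulting bound is
\[
\tw(G) \;\le\; \frac{c+5}{2} \;=\; \frac{(6k-3)+5}{2} \;=\; 3k+1,
\]
as required.

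There is essentially no obstacle: the heavy lifting has already been done in \cref{lem:triangulation-min}, which adapts the splitting procedure to handle heavy edges by falling back to the light crossing edge $\hat e$ whenever the first-choice edge $e_j$ (or the edge $\{u,w\}$ in the unpierced case) is heavy, and in \cref{lem:triangulation-treewidth}, which converts triangle piercing number into a treewidth bound via the weak dual of the triangulation together with the copy-bag lifting scheme. The only small care-point worth flagging is the reduction to the maximal case, to make sure we can legitimately apply \cref{lem:triangulation-min}, which is stated for maximal graphs.
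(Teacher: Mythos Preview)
Your proposal is correct and follows exactly the paper's approach: combine \cref{lem:triangulation-min} (triangle piercing number $\le 6k-3$) with \cref{lem:triangulation-treewidth} (treewidth $\le (c+5)/2$) to obtain $3k+1$. The only nitpick is that your citation of \cref{obs:outer-cycle} for the reduction to the maximal case is slightly off---that observation guarantees the outer cycle of a maximal graph, not the extension of the drawing---but the reduction itself is standard and you clearly understand it.
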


\subsection{Separation Number}
\label{sec:separation}

\cref{thm:okp-tw1.5k,thm:min-okp-tw3k} immediately imply upper bounds
on the separation number of outer $k$-planar and outer min-$k$-planar
graphs.  However, using our triangulations directly, we obtain even
better bounds, namely $k + 2$ for outer $k$-planar graphs and $2k + 1$
for outer min-$k$-planar graphs.  The first bound improves the bound
of $2k+3$ by Chaplick et al.~\cite{BeyondOuterplanarity}.

\begin{lemma}{\label{lem:triangulation-balanced-separator}}
  If $G$ is a graph with a convex drawing whose outer cycle admits
  a triangulation~\T with edge piercing number at most $c$,
  then $G$ has a balanced separator of size at most $c + 2$.
\end{lemma}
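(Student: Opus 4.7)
The plan is to pick a single link $\lambda = \{u,v\}$ of $\T$ that splits the outer cycle into two roughly balanced arcs, and then turn $\lambda$ into a vertex separator by absorbing one endpoint from every edge of $G$ that pierces $\lambda$. Since at most $c$ edges pierce $\lambda$, this will give a separator of size at most $c + 2$, and balance will come for free from the way $\lambda$ is chosen.

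First I would pass to the weak dual tree $T$ of $\T$: its nodes are the triangular faces of $\T$ (so $|V(T)| = n-2$) and its edges are in bijection with the inner links of $\T$. A standard centroid argument for trees produces an edge of $T$ whose removal leaves two subtrees each of size at most $2(n-2)/3$; let $\lambda = \{u,v\}$ be the corresponding inner link. Splitting $\T$ along $\lambda$ separates the vertices of $G$ other than $u$ and $v$ into two arcs $L$ and $R$, and the triangle counts on the two sides translate directly into $|L|,|R| \le 2(n-2)/3 \le 2n/3$. The degenerate cases $n \le 3$, where no inner link exists, I would dispose of by taking $V(G)$ (or any two vertices, in the outerplanar case) as the separator.

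Next I would build the separator itself. For each of the at most $c$ edges of $G$ that pierce $\lambda$, I would pick one of its two endpoints and collect these choices into a set $E \subseteq L \cup R$ of size at most $c$, and then set $S := \{u,v\} \cup E$, so $|S| \le c+2$. Taking $A := L \cup \{u,v\} \cup (E \cap R)$ and $B := R \cup \{u,v\} \cup (E \cap L)$, one has $A \cup B = V(G)$ and $A \cap B = S$, while $A \setminus B \subseteq L$ and $B \setminus A \subseteq R$, so the balance condition $|A \setminus B|,|B \setminus A| \le 2n/3$ holds automatically from the previous paragraph.

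The one point that needs checking, and which I expect to be the main obstacle of the argument, is that no edge of $G$ runs between $A \setminus B$ and $B \setminus A$. Any such edge would join a vertex strictly on the left of $\lambda$ to a vertex strictly on its right, without touching $u$ or $v$, and thus must pierce $\lambda$; but by construction $S$ contains at least one endpoint of every edge that pierces $\lambda$, so every such candidate edge hits $S$ and is therefore absent from the cut between $A \setminus B$ and $B \setminus A$. Together with the size and balance bounds above, this shows that $(A,B)$ is a balanced separation of $G$ of order at most $c+2$.
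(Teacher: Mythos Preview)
Your argument is essentially the paper's: use the weak dual tree of~\T, pick a centroid, take the corresponding link~$\lambda$, and enlarge $\{u,v\}$ by one endpoint of each piercing edge. The paper takes a centroid \emph{vertex} (a triangle $f$), which splits $V\setminus f$ into three arcs $V_1,V_2,V_3$ with $|V_1|\le n/2-1$, and then uses the link of $f$ that cuts off the largest arc; you go directly for a centroid \emph{edge}. Both routes give the same separator.

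One small slip: the claim that a centroid edge of $T$ leaves two subtrees of size at most $2(n-2)/3$ is not quite true. In the dual tree of the fan-free hexagon triangulation (central triangle plus three ears) we have $|V(T)|=4$ and every edge splits $T$ into parts of sizes $1$ and $3>8/3$. What \emph{is} true for a tree of maximum degree~$3$ on $m$ nodes is that the edge from the centroid vertex to its largest subtree yields parts of size at most $m/2$ and at most $(2m+1)/3$; with $m=n-2$ this is $(2n-3)/3<2n/3$, so your final inequality $|L|,|R|\le 2n/3$ survives. Just replace the $2(n-2)/3$ bound by $(2(n-2)+1)/3$, or argue via the centroid triangle as the paper does, and the proof goes through.
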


\begin{proof}
  We construct a balanced separation of order at most $c + 2$ using~\T.
  In short, we select a link~$\lambda$ of~\T that is a balanced separator for~\T and
  put its endpoints into the separator.  Then, we add at most $c$
  vertices to the separator according to the edges piercing~$\lambda$.

  First, we find a ``centroid'' triangle of~\T as follows.  Let $T$ be
  the tree that is the weak dual of~\T.  It is well known that every
  tree contains a vertex such that, after removing it, the number of
  vertices in each subtree is at most half of the original tree.  Let
  $f = \{u, v, w\}$ be the triangle corresponding to such a vertex
  of~$T$.
We partition $V \setminus f$ into three disjoint sets $V_1, V_2, V_3$; see \cref{fig:separation-number-1}.
We may assume that $V_1$ is the largest among the three sets.

\begin{figure}[h]
    \begin{subfigure}[b]{0.48\textwidth}
      \centering
      \includegraphics[page=1]{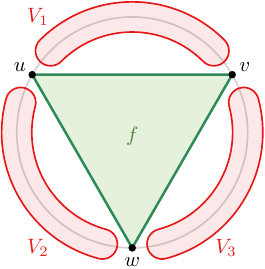}
      \subcaption{``centroid'' triangle $f$ and vertex sets $V_1$,
        $V_2$, $V_3$.}
        \label{fig:separation-number-1}
    \end{subfigure}
    \hfill
    \begin{subfigure}[b]{0.48\textwidth}
      \centering
      \includegraphics[page=2]{separation-number}
      \subcaption{\nolinenumbers{}edges piercing $\{u, v\}$ and
        separation~$(A, B)$.}
      \label{fig:separation-number-2}
    \end{subfigure}
    \caption{A triangulation with edge piercing number~$c$ yields a
      balanced separator of size at most~$c+2$.}
    \label{fig:separation-number}
\end{figure}

Now, we construct the desired separation with the help of the sets
$V_1, V_2, V_3$.  As $\lambda=\{u, v\}$ is a link of~\T, at most $c$ edges
pierce~$\lambda$, connecting vertices in~$V_1$ with vertices in
$V_2 \cup \{w\} \cup V_3$.  Let $S$ be the set of endpoints of the
piercing edges on the latter side.
Let $A = V_1 \cup \{u, v\} \cup S$ and $B = V_2 \cup
\{u,v,w\} \cup V_3$; see \cref{fig:separation-number-2}.
We claim that $(A, B)$ is a balanced separation of order at most $c + 2$.

Clearly, the order of $(A, B)$ is $|A \cap B| = |S \cup \{u, v\}|$, which is at most $c + 2$.
Hence, it suffices to show that the sizes of $A \setminus B$ and $B \setminus A$ are at most $2n/3$.
To this end, we first bound the size of $V_1$ as follows:
\begin{align*}
\frac{n}{3} - 1 \leq |V_1| \leq \frac{n}{2}-1.
\end{align*}
The lower bound holds as $|V_1| + |V_2| + |V_3| = n - 3$ and $|V_1|$
is the largest.
The upper bound can be shown by the fact that a triangulated outerplanar graph has exactly $F + 2$ vertices, where $F$ is the number of triangles.
By the way of choosing $f$, the vertex set $V_1 \cup \{u, v\}$ induces at most $n/2-1$ triangles.
Therefore, $|V_1| = | V_1 \cup \{u, v\}| - 2 \leq (n/2-1 + 2) - 2 = n/2-1$.

Now we can confirm that the sizes of $A \setminus B$ and
$B \setminus A$ are at most $2n/3$ as follows:
\begin{align*}
|A \setminus B| &= |V_1| \leq \frac{n}{2} - 1 \leq \frac{2}{3} n
                  \text{~ and} \\
|B \setminus A| &\leq |V_2 \cup \{w\} \cup V_3| = |V \setminus (V_1 \cup \{u, v\})| \leq n - \left(\frac{n}{3} - 1 \right) - 2 = \frac{2}{3} n - 1. \qedhere
\end{align*}
\end{proof}

For every $k$, the classes of outer $k$-planar graphs and outer min-$k$-planar graphs are closed under taking subgraphs.
Hence, by \cref{lem:triangulation-balanced-separator,lem:triangulation-strong,lem:triangulation-min},
we obtain the following bounds.

\begin{theorem}
  \label{thm:separation-number-okp}
  \label{thm:separation-number-o-mink-p}\sloppy
  Every outer $k$-planar graph has separation number at most $k + 2$,
  and for $k \geq 1$, every~outer min-$k$-planar graph has separation
  number at most $2k+1$.%
\end{theorem}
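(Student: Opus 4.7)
The plan is to assemble the theorem directly from the three ingredients already proved: the two triangulation lemmas that give small edge piercing number (\cref{lem:triangulation-strong} for outer $k$-planar graphs and \cref{lem:triangulation-min} for outer min-$k$-planar graphs), together with \cref{lem:triangulation-balanced-separator} that converts such a triangulation into a balanced separator whose size exceeds the edge piercing number by only~$2$. The one conceptual point that does not appear in those lemmas is the step from \emph{balanced separator} to \emph{separation number}, which requires a hereditary argument.

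First, I would observe that being outer $k$-planar (respectively outer min-$k$-planar) is preserved under taking subgraphs: deleting a vertex or an edge from a graph with an outer $k$-planar (resp.\ outer min-$k$-planar) drawing leaves a drawing that still witnesses the same property. Hence, to bound the separation number it suffices to prove that an \emph{arbitrary} outer $k$-planar (resp.\ outer min-$k$-planar) graph $H$ admits a balanced separator of the claimed size, since the same bound then applies uniformly to every subgraph.

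Next, for such a graph $H$, I would add edges to complete it to a maximal outer $k$-planar (resp.\ outer min-$k$-planar) supergraph $H^\star$ on the same vertex set with the same cyclic order (via \cref{obs:moving-points-on-circle,obs:outer-cycle}). Any balanced separator of $H^\star$ is a balanced separator of $H$, because separations in $V(H)=V(H^\star)$ with no $H^\star$-edges across have no $H$-edges across either. Now \cref{lem:triangulation-strong} yields a triangulation of the outer cycle of $H^\star$ with edge piercing number at most $k$, and \cref{lem:triangulation-balanced-separator} turns this into a balanced separator of order at most $k+2$; this proves the first part of the theorem. For the second part, I would instead invoke \cref{lem:triangulation-min}, which gives a triangulation with edge piercing number at most $2k-1$, whence \cref{lem:triangulation-balanced-separator} produces a balanced separator of order at most $(2k-1)+2 = 2k+1$.

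I do not anticipate a true obstacle here; the work is just to make sure that the subgraph-closure argument is stated cleanly, since \cref{lem:triangulation-balanced-separator} is formulated for a single graph but the separation number quantifies over all subgraphs. One small subtlety worth flagging is that \cref{lem:triangulation-min} requires $k\ge 1$, which matches the case distinction in the theorem statement; for outer min-$0$-planar graphs (i.e., outerplanar graphs) the bound $2$ follows from the treewidth-$2$ folklore result and is not reclaimed here.
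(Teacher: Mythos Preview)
Your proposal is correct and follows essentially the same route as the paper: invoke subgraph-closure of the outer (min-)$k$-planar classes, then combine \cref{lem:triangulation-strong} (resp.\ \cref{lem:triangulation-min}) with \cref{lem:triangulation-balanced-separator}. You even make explicit the step of passing to a maximal supergraph before applying the triangulation lemmas, which the paper leaves implicit.
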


\section{Lower Bounds}
\label{sec:lower}

In this section, we complement the results in \cref{sec:applications}, by giving lower bounds for the separation number and treewidth of outer $k$-planar graphs.
To show them, we use grid-like graphs called \emph{stacked prisms}.
The $m \times n$ stacked prism $Y_{m,n}$ is the (planar) graph obtained
by connecting all pairs vertices in the topmost and bottommost row of the $m \times n$ grid that are in the same column; see \cref{fig:stacked-prism-grid-like}.

\begin{figure}
  \begin{subfigure}[b]{.47\textwidth}
    \centering
    \includegraphics[page=1]{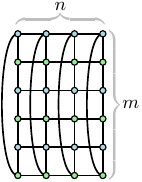}
    \subcaption{grid-like drawing}
    \label{fig:stacked-prism-grid-like}
  \end{subfigure}
  \hfill
  \begin{subfigure}[b]{.47\textwidth}
    \centering
    \includegraphics[page=2]{stacked-prism}
    \subcaption{\nolinenumbers{}clipping of a circular drawing}
    \label{fig:stacked-prism-circular}
  \end{subfigure}
  \caption{Two drawings of the stacked prism $Y_{m,n}$.}
  \label{fig:stacked-prism}
\end{figure}

\begin{theorem}
  \label{thm:sn-lb-okp}
  For every even number $k \geq 0$, there exists a graph $G$ such that
  $\lcro(G) = k$ and $\sn(G) = k + 2$.
\end{theorem}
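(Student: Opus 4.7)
The plan is to take $G = Y_{m,n}$ with $n = k/2 + 1$ and $m$ sufficiently large (say $m \geq \max(6, k+2)$), and establish the two inequalities $\lcro(G) \leq k$ and $\sn(G) \geq k+2$ separately. Combined with $\sn(G) \leq \lcro(G) + 2$ from \cref{thm:separation-number-okp}, these force both $\lcro(G) = k$ and $\sn(G) = k+2$ simultaneously.

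For $\lcro(G) \leq k$, I would place the $mn$ vertices on the circle row by row in a snake pattern---row $m$ in column order $1, \ldots, n$, then row $m-1$ in reverse, then row $m-2$ in column order, and so on up to row $1$---as indicated in \cref{fig:stacked-prism-circular}. With this ordering every horizontal edge $(i,j)(i,j+1)$ joins two consecutive positions on the circle, so it has no crossings, while every vertical or wrap edge is a chord confined to a pair of cyclically consecutive row-blocks. A direct count shows that each such chord is crossed by exactly the $n-1$ non-endpoint-sharing vertical or wrap edges coming in from each of the two adjacent row-blocks, giving at most $2(n-1) = k$ crossings per edge.

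For $\sn(G) \geq k+2$, let $(A, B)$ be any balanced separation of $G$ with separator $S = A \cap B$. For each column $j$, set $A_j = V(C_j) \cap (A \setminus B)$, $B_j = V(C_j) \cap (B \setminus A)$, $S_j = V(C_j) \cap S$, and call column $j$ \emph{mixed} if $A_j$ and $B_j$ are both nonempty, \emph{$A$-typed} if $B_j = \emptyset$, and \emph{$B$-typed} if $A_j = \emptyset$. The argument rests on three facts: (i)~a mixed column has $|S_j| \geq 2$ because $C_j$ is a cycle and $S_j$ must separate $A_j$ from $B_j$ along it; (ii)~for any $A$-typed column $j_1$ and any $B$-typed column $j_2 > j_1$, each row $i$ must contain an $S$-vertex among $(i, j_1), (i, j_1+1), \ldots, (i, j_2)$, so $\sum_{l = j_1}^{j_2} |S_l| \geq m$; and (iii)~a column entirely contained in $S$ contributes $m$ by itself. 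A case analysis on the type pattern of the $n$ columns, using the balance condition $|A \setminus B|, |B \setminus A| \leq 2mn/3$, yields $|S| \geq 2n = k + 2$: if both pure types occur, (ii) gives $|S| \geq m \geq 2n$; if only one pure type occurs, balance forces $|S| \geq mn/3 \geq 2n$ for $m \geq 6$; and if all columns are mixed, (i) gives $|S| \geq 2n$ directly. The main obstacle will be the intermediate case in which mixed and $A$-typed columns coexist without any $B$-typed column (and symmetrically); here I expect to combine (i) for the mixed columns with a bound on the neighborhood of $B \setminus A$ in $G$ derived from the horizontal edges into adjacent $A$-typed columns, together with the balance bound on $|A \setminus B|$, to conclude $|S| \geq 2n$.
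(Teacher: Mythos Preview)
Your overall framework---taking $G=Y_{m,n}$ with $n=k/2+1$, exhibiting the snake drawing to get $\lcro(G)\le k$, proving $\sn(G)\ge k+2$, and then invoking \cref{thm:separation-number-okp} to pin down both equalities---is exactly what the paper does. The convex drawing and the crossing count are also essentially the same (two minor slips: the number of crossings on a column edge is \emph{at most} $2(n-1)$, not exactly that; and you need $m$ even for the snake layout to close up correctly at the wrap edges).

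Where you diverge is in the lower bound $\sn(Y_{m,n})\ge 2n$. Your column-typing case analysis is more elaborate than necessary, and you yourself flag the intermediate case (mixed columns together with only one pure type) as unfinished. That case is not a formality: your stated bound ``balance forces $|S|\ge mn/3$'' only holds when \emph{every} column is of the single pure type, since as soon as mixed columns appear, $B\setminus A$ may be nonempty and balance only yields $|S|+|B\setminus A|\ge mn/3$. With your proposed $m=\max(6,2n)$ one can construct configurations (e.g.\ $n=3$, $m=6$, one mixed column and two $A$-typed columns) in which the ingredients you list---$2$ per mixed column, horizontal-neighbourhood bounds, and balance---do not combine to $2n$ without further work and, in fact, may require larger $m$.

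The paper sidesteps the case analysis entirely. If $|S|<2n$, pigeonhole gives a column with at most one vertex of $S$; removing that vertex from the column cycle leaves a path $P$ on at least $m-1$ vertices. At least $m-|S|>m-2n$ rows contain no vertex of $S$, and each such row meets $P$ and is therefore contained in the component of $G-S$ that contains $P$. Hence that component has at least $n(m-2n)$ vertices, whose proportion $1-2n/m$ exceeds $2/3$ once $m$ is large enough, contradicting balance. This argument is short, handles all column-type patterns uniformly, and makes the ``$m$ sufficiently large'' requirement transparent.
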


\begin{proof}
  We first observe that, if $m$ is even, $\lcro(Y_{m, n})$ is at most $2n-2$.
  If $m$ is even, we can place the $m$ rows of $Y_{m,n}$, each of length~$n$, one after the other, alternating in their direction, around a circle; see
  \cref{fig:stacked-prism-circular}.
  The grid has two types of edges: the \emph{row edges} and the \emph{column edges}, which connect vertices in the same row or column, respectively.
  As \cref{fig:stacked-prism-circular} shows, the row edges have no crossing.
  For each $i \in [n]$, there is a column edge~$e_i$ that spans $2i-2$ other vertices along the perimeter of the circle. 
  Each of these vertices is incident to exactly one column edge that crosses~$e_i$.
  Hence, every column edge has at most $2n-2$ crossings, and $\lcro(Y_{m,n}) \le 2n-2$.

  Let $n=k/2 + 1$, and let $m$ be a sufficiently large even number.
  Then we claim that $Y_{m, n}$ fulfills the conditions.
  As we discussed $\lcro(Y_{m, n}) \leq k$ holds and therefore $\sn(Y_{m, n}) \leq k + 2$ follows from \cref{thm:separation-number-okp}.
  Hence, it suffices to show $\sn(Y_{m, n}) \geq k + 2 = 2n$, which also implies $\lcro(Y_{m,n}) \geq 2n-2 = k$ by \cref{thm:separation-number-okp}. 

  Suppose that $Y_{m, n}$ has a balanced separator $S$ of size less than $2n$.
  As there are $n$ columns, there is a column that contains at most one vertex of $S$.
  This column contains a path $P$ of length $m-1$ that does not intersect $S$.
  Now observe that at least $m - 2n$ rows do not contain any vertex of $S$, and therefore, all vertices in these rows are connected to $P$.
  Hence, after removing $S$, the size of the connected component $C$ that contains $P$ is at least $n (m-2n)$.
  The ratio $|V(C)| / |V(Y_{m, n})|$ is $1 - (2n / m)$, which is greater than $2/3$ if $m$ is sufficiently large.
\end{proof}

Aidun et al.~\cite{ToroidalGridTW} showed that $\tw(Y_{m,n})=2n$ if $m > 2n$.
With the same stacked prism, we obtain the following lower bound.

\begin{theorem}
  \label{thm:tw-lb-okp}
  For every even number $k \geq 0$, there exists a graph $G$ such that $\lcro(G) = k$ and $\tw(G) = k + 2$.
\end{theorem}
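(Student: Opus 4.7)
The plan is to reuse the same graph family as in \cref{thm:sn-lb-okp}: set $n=k/2+1$ and let $m$ be an even integer with $m>2n$, and take $G=Y_{m,n}$. Then $\lcro(G)=k$ is already established as part of the proof of \cref{thm:sn-lb-okp} (combining the explicit circular drawing showing $\lcro(Y_{m,n})\le 2n-2=k$ with the lower bound $\sn(Y_{m,n})\ge 2n$ and the inequality $\sn(G)\le \lcro(G)+2$ from \cref{thm:separation-number-okp}), so it only remains to determine $\tw(G)$.

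For the upper bound, I would invoke the cited result of Aidun et al.~\cite{ToroidalGridTW}, which directly gives $\tw(Y_{m,n})=2n$ whenever $m>2n$. Since $2n=k+2$, this yields $\tw(G)\le k+2$.

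For the matching lower bound, one could in principle appeal again to Aidun et al., but an alternative self-contained route is to reuse the balanced-separator argument from the proof of \cref{thm:sn-lb-okp}: any tree decomposition of width $w$ has a bag whose removal leaves components each of size at most $\frac23|V(G)|$, so $\sn(Y_{m,n})\le \tw(Y_{m,n})+1$; combined with $\sn(Y_{m,n})\ge 2n$ this only gives $\tw(Y_{m,n})\ge 2n-1$. To close the last unit, the cleanest route is simply to cite Aidun et al.\ for the exact value $\tw(Y_{m,n})=2n$, which is the only place in the argument where anything beyond what we already proved is needed.

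The main obstacle, and essentially the only nonroutine point, is the exact determination of the treewidth of the stacked prism; everything else is a direct re-packaging of the constructions and estimates already used in \cref{thm:sn-lb-okp}. Since this exact treewidth value is available off the shelf, the whole proof collapses to a single sentence invoking the cited result on top of \cref{thm:sn-lb-okp}.
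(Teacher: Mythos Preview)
Your proposal is correct and is essentially the paper's own argument: the paper states the theorem without a separate proof, relying on the sentence immediately before it (``Aidun et al.\ showed that $\tw(Y_{m,n})=2n$ if $m>2n$. With the same stacked prism, we obtain the following lower bound.''), i.e., take the same $Y_{m,n}$ used for \cref{thm:sn-lb-okp} and read off $\tw=2n=k+2$ from the cited result. One small point: to guarantee $\lcro(Y_{m,n})=k$ via the separation-number route you invoke, you need $m$ sufficiently large (as in \cref{thm:sn-lb-okp}), not merely $m>2n$; simply choose $m$ large enough for both conditions.
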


\section{Conclusion and Open Problems}

We have introduced methods for triangulating drawings of outer
$k$-planar graphs such that the triangulation edges cross few graph
edges.  These triangulations yield better bounds on
treewidth and separation number of outer $k$-planar graphs.  Our
method is constructive; the corresponding treewidth decomposition and
balanced separation can be computed efficiently.
Via our triangulations, we improved the multiplicative constant in the
upper bound on the treewidth of outer $k$-planar graphs from~$3$
to~$1.5$; we showed a lower bound of~$1$.  What is the correct
multiplicative constant?
Finding triangulations of outer $k$-planar graphs with lower triangle
piercing number could be a step in this direction.
It would also be interesting to find other graph classes that admit
triangulations with low triangle piercing number.

\bibliography{main}

\begin{thebibliography}{10}

\bibitem{ToroidalGridTW}
Ivan Aidun, Frances Dean, Ralph Morrison, Teresa Yu, and Julie Yuan.
\newblock Treewidth and gonality of glued grid graphs.
\newblock {\em Discrete Applied Mathematics}, 279:1--11, 2020.
\newblock \href {https://doi.org/10.1016/j.dam.2019.10.024} {\path{doi:10.1016/j.dam.2019.10.024}}.

\bibitem{2LayerkPlanar}
Patrizio Angelini, Giordano Da~Lozzo, Henry F{\"o}rster, and Thomas Schneck.
\newblock 2-layer $k$-planar graphs: Density, crossing lemma, relationships and pathwidth.
\newblock {\em Computer Journal}, 67(3):1005--1016, 2024.
\newblock \href {https://doi.org/10.1093/comjnl/bxad038} {\path{doi:10.1093/comjnl/bxad038}}.

\bibitem{alfs-2lkpg-GD20}
Patrizio Angelini, Giordano~Da Lozzo, Henry F{\"{o}}rster, and Thomas Schneck.
\newblock 2-layer $k$-planar graphs~-- density, crossing lemma, relationships, and pathwidth.
\newblock In David Auber and Pavel Valtr, editors, {\em 28th International Symposium on Graph Drawing and Network Visualization (GD 2020)}, volume 12590 of {\em Lecture Notes in Computer Science}, pages 403--419. Springer, 2020.
\newblock \href {https://doi.org/10.1007/978-3-030-68766-3_32} {\path{doi:10.1007/978-3-030-68766-3_32}}.

\bibitem{abbghnr-ro1pg-GD13}
Christopher Auer, Christian Bachmaier, Franz~J. Brandenburg, Andreas Glei{\ss}ner, Kathrin Hanauer, Daniel Neuwirth, and Josef Reislhuber.
\newblock Recognizing outer 1-planar graphs in linear time.
\newblock In Stephen~K. Wismath and Alexander Wolff, editors, {\em 21st International Symposium on Graph Drawing (GD 2013)}, volume 8242 of {\em Lecture Notes in Computer Science}, pages 107--118. Springer, 2013.
\newblock \href {https://doi.org/10.1007/978-3-319-03841-4_10} {\path{doi:10.1007/978-3-319-03841-4_10}}.

\bibitem{auer2016outer}
Christopher Auer, Christian Bachmaier, Franz~J. Brandenburg, Andreas Glei{\ss}ner, Kathrin Hanauer, Daniel Neuwirth, and Josef Reislhuber.
\newblock Outer 1-planar graphs.
\newblock {\em Algorithmica}, 74(4):1293--1320, 2016.
\newblock \href {https://doi.org/10.1007/s00453-015-0002-1} {\path{doi:10.1007/s00453-015-0002-1}}.

\bibitem{abbghnr-co1pg-Algorithmica21}
Christopher Auer, Christian Bachmaier, Franz~J. Brandenburg, Andreas Glei{\ss}ner, Kathrin Hanauer, Daniel Neuwirth, and Josef Reislhuber.
\newblock Correction to: Outer 1-planar graphs.
\newblock {\em Algorithmica}, 83(11):3534--3535, 2021.
\newblock \href {https://doi.org/10.1007/S00453-021-00874-Z} {\path{doi:10.1007/S00453-021-00874-Z}}.

\bibitem{be-cm1p2p-GD14}
Michael~J. Bannister and David Eppstein.
\newblock Crossing minimization for 1-page and 2-page drawings of graphs with bounded treewidth.
\newblock In Christian~A. Duncan and Antonios Symvonis, editors, {\em 22nd International Symposium on Graph Drawing (GD 2014)}, volume 8871 of {\em Lecture Notes in Computer Science}, pages 210--221. Springer, 2014.
\newblock \href {https://doi.org/10.1007/978-3-662-45803-7_18} {\path{doi:10.1007/978-3-662-45803-7_18}}.

\bibitem{be-cm1p2pdgbtw-JGAA18}
Michael~J. Bannister and David Eppstein.
\newblock Crossing minimization for 1-page and 2-page drawings of graphs with bounded treewidth.
\newblock {\em Journal of Graph Algorithms and Applications}, 22(4):577--606, 2018.
\newblock \href {https://doi.org/10.7155/jgaa.00479} {\path{doi:10.7155/jgaa.00479}}.

\bibitem{Bodlaender1998}
Hans~L. Bodlaender.
\newblock A partial $k$-arboretum of graphs with bounded treewidth.
\newblock {\em Theoretical Computer Science}, 209(1-2):1--45, 1998.
\newblock \href {https://doi.org/10.1016/S0304-3975(97)00228-4} {\path{doi:10.1016/S0304-3975(97)00228-4}}.

\bibitem{BeyondOuterplanarity}
Steven Chaplick, Myroslav Kryven, Giuseppe Liotta, Andre L{\"o}ffler, and Alexander Wolff.
\newblock Beyond outerplanarity.
\newblock In Fabrizio Frati and Kwan-Liu Ma, editors, {\em 25th International Symposium on Graph Drawing and Network Visualization (GD 2017)}, volume 10692 of {\em Lecture Notes in Computer Science}, pages 546--559. Springer, 2017.
\newblock \href {https://doi.org/10.1007/978-3-319-73915-1_42} {\path{doi:10.1007/978-3-319-73915-1_42}}.

\bibitem{cdkprw-bcr-GD19}
Steven Chaplick, Thomas~C. van Dijk, Myroslav Kryven, Ji{-}won Park, Alexander Ravsky, and Alexander Wolff.
\newblock Bundled crossings revisited.
\newblock In Daniel Archambault and Csaba~D. T{\'{o}}th, editors, {\em 27th International Symposium on Graph Drawing and Network Visualization (GD 2019)}, volume 11904 of {\em Lecture Notes in Computer Science}, pages 63--77. Springer, 2019.
\newblock \href {https://doi.org/10.1007/978-3-030-35802-0_5} {\path{doi:10.1007/978-3-030-35802-0_5}}.

\bibitem{bundles}
{Steven} {Chaplick}, {Thomas}~C. {van Dijk}, {Myroslav} {Kryven}, {Ji-won} {Park}, {Alexander} {Ravsky}, and {Alexander} {Wolff}.
\newblock Bundled crossings revisited.
\newblock {\em Journal of Graph Algorithms and Applications}, 24(4):621--655, 2020.
\newblock \href {https://doi.org/10.7155/jgaa.00535} {\path{doi:10.7155/jgaa.00535}}.

\bibitem{courcelle1990}
Bruno Courcelle.
\newblock The monadic second-order logic of graphs. {I.} {R}ecognizable sets of finite graphs.
\newblock {\em Information and Computation}, 85(1):12--75, 1990.
\newblock \href {https://doi.org/10.1016/0890-5401(90)90043-H} {\path{doi:10.1016/0890-5401(90)90043-H}}.

\bibitem{durocher2023cops}
Stephane Durocher, Shahin Kamali, Myroslav Kryven, Fengyi Liu, Amirhossein Mashghdoust, Avery Miller, Pouria~Zamani Nezhad, Ikaro~Penha Costa, and Timothy Zapp.
\newblock Cops and robbers on 1-planar graphs.
\newblock In Michael~A. Bekos and Markus Chimani, editors, {\em 31st International Symposium on Graph Drawing and Network Visualization (GD 2023)}, volume 14466 of {\em Lecture Notes in Computer Science}, pages 3--17. Springer, 2023.
\newblock \href {https://doi.org/10.1007/978-3-031-49275-4_1} {\path{doi:10.1007/978-3-031-49275-4_1}}.

\bibitem{dn-tgbs-JCTB19}
Zden\v{e}k Dvo\v{r}\'ak and Sergey Norin.
\newblock Treewidth of graphs with balanced separations.
\newblock {\em Journal of Combinatorial Theory, Series B}, 137:137--144, 2019.
\newblock \href {https://doi.org/10.1016/j.jctb.2018.12.007} {\path{doi:10.1016/j.jctb.2018.12.007}}.

\bibitem{jkt-crggf-CDM10}
Gwena{\"{e}}l Joret, Marcin Kami{\'n}ski, and Dirk~Oliver Theis.
\newblock The cops and robber game on graphs with forbidden (induced) subgraphs.
\newblock {\em Contributions to Discrete Mathematics}, 5(2):40--51, 2010.
\newblock \href {https://doi.org/10.11575/cdm.v5i2.62032} {\path{doi:10.11575/cdm.v5i2.62032}}.

\bibitem{kinnersley1994}
Nancy~G. Kinnersley and Michael~A. Langston.
\newblock Obstruction set isolation for the gate matrix layout problem.
\newblock {\em Discrete Applied Mathematics}, 54(2-3):169--213, 1994.
\newblock \href {https://doi.org/10.1016/0166-218X(94)90021-3} {\path{doi:10.1016/0166-218X(94)90021-3}}.

\bibitem{proskurowski1999}
Andrzej Proskurowski and Jan~Arne Telle.
\newblock Classes of graphs with restricted interval models.
\newblock {\em Discrete Mathematics \& Theoretical Computer Science}, 3(4):167--176, 1999.
\newblock \href {https://doi.org/10.46298/dmtcs.263} {\path{doi:10.46298/dmtcs.263}}.

\bibitem{Schaefer2024}
Marcus Schaefer.
\newblock The graph crossing number and its variants: A survey.
\newblock {\em Electronic Journal of Combinatorics}, DS21(version 8), 2024.
\newblock \href {https://doi.org/10.37236/2713} {\path{doi:10.37236/2713}}.

\bibitem{wt-pdcng-NYJM07}
David~R. Wood and Jan~Arne Telle.
\newblock Planar decompositions and the crossing number of graphs with an excluded minor.
\newblock {\em New York Journal of Mathematics}, 13:117--146, 2007.
\newblock URL: \url{https://nyjm.albany.edu/j/2007/13-8.html}.

\end{thebibliography}

\end{document}